\documentclass[11pt,journal,twoside,onecolumn,draftcls]{IEEEtran}
\date{}

\usepackage{graphicx,subfigure}
\usepackage{amsfonts,amsmath,amssymb, mathrsfs}
\usepackage{epic,eepic,eepicemu}
\usepackage{epsf}
\usepackage{epsfig}
\usepackage{graphics}
\usepackage{psfrag}

\newtheorem{theorem}{Theorem}
\newtheorem{definition}{Definition}
\newtheorem{corollary}{Corollary}

\newtheorem{lemma}{Lemma}


\makeatletter
\def\@cite#1#2{[\if@tempswa #2 \fi #1]}
\makeatother



\newcommand{\beq}{\begin{equation}}
\newcommand{\eeq}{\end{equation}}
\newcommand{\bea}{\begin{eqnarray}}
\newcommand{\eea}{\end{eqnarray}}




\long\def\symbolfootnote[#1]#2{\begingroup
\def\thefootnote{\fnsymbol{footnote}}\footnote[#1]{#2}\endgroup}

\long\def\comment#1{}



\begin{document}
\title{\huge{Evaluation of Marton's Inner Bound for the General Broadcast Channel}}
\author{Amin Aminzadeh Gohari and Venkat Anantharam \\
Department of Electrical Engineering and Computer Science \\ University of California, Berkeley\\
\texttt{\small $\{$aminzade,ananth$\}$@eecs.berkeley.edu} \\
}\maketitle
\begin{abstract}
The best known inner bound on the two-receiver general broadcast
channel without a common message is due to Marton \cite{Marton}.
This result was subsequently generalized in \cite[p. 391, Problem
10(c)]{CsiszárKörner} and \cite{GelfandPinsker} to broadcast channels with a common message. However the latter region is not
computable (except in certain special cases) as no bounds on the
cardinality of its auxiliary random variables exist. Nor is it even
clear that the inner bound is a closed set. The main obstacle in
proving cardinality bounds is the fact that the traditional use of the Carath\'{e}odory
theorem, the main known tool for proving cardinality bounds, does
not yield a finite cardinality result. One of the main contributions of this paper is the introduction of 
a new tool based on an
identity that relates the second derivative of the Shannon entropy
of a discrete random variable (under a certain perturbation) to the
corresponding Fisher information. In order to go beyond the
traditional Carath\'{e}odory type arguments, we identify certain
properties that the auxiliary random variables corresponding to the
extreme points of the inner bound need to satisfy. These properties are then
used to establish cardinality bounds on the auxiliary random
variables of the inner bound, thereby proving the computability of
the region, and its closedness.

Lastly, we establish a conjecture of \cite{NairZizhou} that Marton's inner bound and the recent outer
bound of Nair and El Gamal do not match in general.
\end{abstract}
\section{Introduction}
In this paper, we consider two-receiver general broadcast channels.
A two-receiver broadcast channel is characterized by the conditional
distribution $q(y,z|x)$ where $X$ is the input to the channel and
$Y$ and $Z$ are the outputs of the channel at the two receivers. Let
$\mathcal{X}$, $\mathcal{Y}$ and $\mathcal{Z}$ denote the alphabet
set of $X$, $Y$ and $Z$ respectively. The transmitter wants to send
a common message, $M_0$, to both the receivers and two private
messages $M_1$ and $M_2$ to $Y$ and $Z$ respectively. Assume that
$M_0$, $M_1$ and $M_2$ are mutually independent, and $M_i$ (for
$i=0,1,2$) is a uniform random variable over set $\mathcal{M}_i$.
The transmitter maps the messages into a codeword of length $n$
using an encoding function $\zeta:
\mathcal{M}_0\times\mathcal{M}_1\times\mathcal{M}_2\rightarrow
\mathcal{X}^n$, and sends it over the broadcast channel $q(y,z|x)$
in $n$ times steps. The receivers use the decoding functions
$\vartheta_y:\mathcal{Y}^n \rightarrow
\mathcal{M}_0\times\mathcal{M}_1$ and $\vartheta_z:\mathcal{Z}^n
\rightarrow \mathcal{M}_0\times\mathcal{M}_2$ to map their received
signals to $(\widehat{M_0}^{(1)}, \widehat{M_1})$ and
$(\widehat{M_0}^{(2)}, \widehat{M_2})$ respectively. The average
probability of error is then taken to be the probability that
$(\widehat{M_0}^{(1)}, \widehat{M_1}, \widehat{M_0}^{(2)},
\widehat{M_2})$ is not equal to $(M_0, M_1, M_0, M_2)$.

The capacity region of the broadcast channel is defined as the set
of all triples $(R_0, R_1, R_2)$ such that for any $\epsilon>0$,
there is some integer $n$, uniform random variables $M_0$, $M_1$,
$M_2$ with alphabet sets $|\mathcal{M}_i|\geq 2^{n(R_i-\epsilon)}\ $
(for $i=0,1,2$), encoding function $\zeta$, and decoding functions
$\vartheta_y$ and $\vartheta_z$ such that the average probability of
error is less than or equal to $\epsilon$.

The capacity region of the broadcast channel is not known except in
certain special cases. The best achievable region of triples $(0,
R_1, R_2)$ for the broadcast channel is due to Marton \cite[Theorem
2]{Marton}. Marton's work was subsequently generalized in \cite[p.
391, Problem 10(c)]{CsiszárKörner}, and Gelfand and Pinsker
\cite{GelfandPinsker} who established the achievability of the
region formed by taking union over random variables $U,V,W,X,Y,Z$,
having the joint distribution $p(u,v,w,x,y,z)=p(u,v,w,x)q(y,z|x)$,
of
  \begin{eqnarray}
    R_0, R_1, R_2&\geq& 0;\nonumber\\
    R_0&\leq& \min(I(W;Y),I(W;Z));\label{eqn:DefOfMKG1}\\
    R_0+R_1&\leq& I(UW;Y);\label{eqn:DefOfMKG2}\\
    R_0+R_2&\leq& I(VW;Z);\label{eqn:DefOfMKG3}\\
    R_0+R_1+R_2&\leq&
I(U;Y|W)+I(V;Z|W)-I(U;V|W)\nonumber\\&&+\min(I(W;Y),I(W;Z)).\label{eqn:DefOfMKG4}
  \end{eqnarray}
In Marton's original work, the auxiliary random variables $U, V$ and
$W$ are finite random variables. We however allow the auxiliary
random variables $U, V$ and $W$ to be discrete or continuous random
variables to get an apparently larger region. The main result of
this paper however implies that this relaxation will not make the
region grow. We refer to this region as Marton's inner bound for
the general broadcast channel. Recently Liang and Kramer reported an
apparently larger inner bound to the broadcast channel
\cite{LiangKramerInner}, which however turns out to be equivalent to
Marton's inner bound \cite{LiangKramerPoor}. Marton's inner bound
therefore remains the currently best known inner bound on the
general broadcast channel. Liang, Kramer and Poor showed that in
order to evaluate Marton's inner bound, it suffices to search over
$p(u,v,w,x)$ for which either $I(W;Y)=I(W;Z)$, or
$I(W;Y)>I(W;Z)$\&$V=constant$, or $I(W;Y)<I(W;Z)$\&$U=constant$
holds \cite{LiangKramerPoor}. This restriction however does not lead
to a computable characterization of the region.

Unfortunately Marton's inner bound is not computable (except in
certain special cases) as no bounds on the cardinality of its
auxiliary random variables exist. A prior work by Hajek and Pursley
derives cardinality bounds for an earlier inner bound of Cover and
van der Meulen for the special case of $X$ is binary, and $R_0=0$
\cite{HajekPursley}; Hajek and Pursley showed that $X$ can be taken
as a deterministic function of the auxiliary random variables
involved, and conjectured certain cardinality bounds on the
auxiliary random variables when $|\mathcal{X}|$ is arbitrary but
$R_0$ is equal to zero. For the case of non-zero $R_0$, Hajek and
Pursley commented that finding cardinality bounds appears to be
considerably more difficult. The inner bound of Cover and van der
Meulen was however later improved by Marton. A Carath\'{e}odory-type
argument results in a cardinality bound of
$|\mathcal{V}||\mathcal{X}|+1$ on $|\mathcal{U}|$, and a cardinality
bound of $|\mathcal{U}||\mathcal{X}|+1$ on $|\mathcal{V}|$ for
Marton's inner bound. This does not lead to fixed cardinality bounds
on the auxiliary random variables $U$ and $V$. The main result of
this paper is to prove that the subset of Marton's inner bound
defined by imposing extra constraints $|\mathcal{U}|\leq
|\mathcal{X}|$, $|\mathcal{V}|\leq |\mathcal{X}|$,
$|\mathcal{W}|\leq |\mathcal{X}|+4$ and $H(X|UVW)=0$ is identical to
Marton's inner bound.

One of the main contributions of this paper is the perturbation technique. At the heart of this technique lies the following observation:
consider an arbitrary set of finite random variables $X_1, X_2, ...,
X_n$ jointly distributed according to $p_0(x_1, x_2, ..., x_n)$. One
can represent a perturbation of this joint distribution by a vector
consisting of the first derivative of the individual probabilities
$p_0(x_1, x_2, ..., x_n)$ for all values of $x_1$, $x_2$, ...,
$x_n$. We however suggest the following perturbation that can be
represented by a real valued random variable, $L$, jointly
distributed by $X_1, X_2, ..., X_n$ and satisfying
$\mathbb{E}[L]=0$, $\big|\mathbb{E}[L|X_1=x_1, X_2=x_2,...,
X_n=x_n]\big|<\infty$ for all values of $x_1$, $x_2$, ..., $x_n$:
$$p_{\epsilon}(\widehat{X}_1=x_1,...,\widehat{X}_n=x_n)= p_0(X_1=x_1,...,X_n=x_n)\cdot \big(1+\epsilon\cdot
\mathbb{E}[L|X_1=x_1,...,X_n=x_n]\big),$$ where $\epsilon$ is a real
number in some interval $[-\overline{\epsilon}_1,
\overline{\epsilon}_2]$. Random variable $L$ is a canonical way of
representing the direction of perturbation since given any subset of
indices $I\subset \{1,2,3,...,n\}$, one can verify that the
following equation for the marginal distribution of random variables
$\widehat{X}_i$ for $i\in I$:
$$p_{\epsilon}(\widehat{X}_{i\in I}=x_{i\in I})= p_0(X_{i\in I}=x_{i\in I})\cdot \big(1+\epsilon\cdot
\mathbb{E}[L|X_{i\in I}=x_{i\in I}]\big).$$ Furthermore for any set
of indices $I\subset \{1,2,3,...,n\}$, the second derivative of the
joint entropy of random variables $\widehat{X}_i$ for $i\in I$ as a
function of $\epsilon$ is related to the problem of MMSE estimation
of $L$ from $X_{i\in I}$:
$$ \frac{\partial^2}{\partial \epsilon^2}H(\widehat{X}_{i\in
I})\mid_{\epsilon=0}=-\log e\cdot
\mathbb{E}\big[\mathbb{E}[L|X_{i\in I}]^2\big].$$ Lemma \ref{Lemma3}
describes a generic version of the above identity that relates the
second derivative of the Shannon entropy of a discrete random
variable to the corresponding Fisher information. This identity is
to best of our knowledge new. It is repeatedly invoked in our proofs
to compute the second derivative of various expressions.

It is known that Marton's inner bound coincides with the outer bound of Nair and El Gamal for the degraded, less noisy, more capable, and
semi-deterministic broadcast channels. Nair and Zizhou showed that
Marton's inner bound and the recent outer bound of Nair and El Gamal
are different for a BSSC channel with parameter $\frac{1}{2}$ if a
certain conjecture holds\footnote{The conjecture is as follows:
\cite[Conjecture 1]{NairZizhou}: Given any five random variables $U,
V, X, Y, Z$ satisfying $I(UV;YZ|X)=0$, the inequality
$I(U;Y)+I(V;Z)-I(U;V)\leq \max\big(I(X;Y), I(X;Z)\big)$ holds
whenever $X$, $Y$ and $Z$ are binary random variables and the
channel $p(y,z|x)$ is BSSC with parameter $\frac{1}{2}$.}. In this
paper, we provide examples of broadcast channels for which the two
bounds do not match. Since the original submission of this paper, Nair, Wang and Geng \cite{NairWangGeng} showed that the inequality $I(U;Y)+I(V;Z)-I(U;V)\leq \max\big(I(X;Y), I(X;Z)\big)$ holds for all binary input broadcast channels. The authors employ a generalized version of the perturbation method introduced in this paper that also allows for additive perturbations. The authors of \cite{GohariElGamalAnantharam} prove various results that help to restrict the search space for computing the sum-rate for Marton's inner bound. 

The outline of this paper is as follows. In section
\ref{Section:Definition}, we introduce the basic notation and
definitions we use. Section \ref{Section:MainResults} contains the main results of the
paper followed by section \ref{Section:Proofs} which gives formal
proofs for the results. Section \ref{Section:Description} describes the new ideas, and appendices complete the proof of theorems
from section \ref{Section:Proofs}.
\section{Definitions and Notation}\label{Section:Definition}
Let $\mathbb{R}$ denote the set of real numbers. All the logarithms
throughout this paper are in base two, unless stated otherwise. Let
$\mathcal{C}(q(y, z|x))$ denote the capacity region of the broadcast
channel $q(y,z|x)$. We use $X_{1:k}$ to denote $(X_{1}, X_{2},...,
X_{k})$; similarly we use $Y_{1:k}$ and $Z_{1:k}$ to denote $(Y_{1},
Y_{2},..., Y_{k})$ and $(Z_{1}, Z_{2},..., Z_{k})$ respectively.

\begin{definition}\label{definition:4N} \label{def:downset}
For two vectors $\overrightarrow{v_1}$ and $\overrightarrow{v_2}$ in
$\mathbb{R}^d$, we say $\overrightarrow{v_1}\geq
\overrightarrow{v_2}$ if and only if each coordinate of
$\overrightarrow{v_1}$ is greater than or equal to the corresponding
coordinate of $\overrightarrow{v_2}$. For a set $A \subset
\mathbb{R}^d$, the down-set $\Delta(A)$ is defined as:
$\Delta(A)=\{\overrightarrow{v} \in \mathbb{R}^d:\
\overrightarrow{v} \leq \overrightarrow{w} \mbox{ for some }
\overrightarrow{w} \in A\}$.\end{definition}

\begin{definition}\label{Definition1} Let
$\mathcal{C}_{M}(q(y, z|x))$ denote Marton's inner bound on the
channel $q(y,z|x)$. $\mathcal{C}_{M}(q(y, z|x))$ is defined as the
union over non-negative triples $(R_0, R_1, R_2)$ satisfying
equations \ref{eqn:DefOfMKG1}, \ref{eqn:DefOfMKG2},
\ref{eqn:DefOfMKG3}, and \ref{eqn:DefOfMKG4} over random variables
$U,V,W,X,Y,Z$, having the joint distribution
$p(u,v,w,x,y,z)=p(u,v,w,x)q(y,z|x)$. Please note that the auxiliary
random variables $U, V$ and $W$ may be discrete or continuous random
variables.
\end{definition}

\begin{definition} The region ${\mathcal{C}}_{M}^{S_u, S_v, S_w}(q(y,
z|x))$ is defined as the union of non-negative triples $(R_0, R_1,
R_2)$ satisfying equations \ref{eqn:DefOfMKG1}, \ref{eqn:DefOfMKG2},
\ref{eqn:DefOfMKG3} and \ref{eqn:DefOfMKG4}, over discrete random
variables $U,V,W,X,Y,Z$ satisfying the cardinality bounds
$|\mathcal{U}|\leq S_u$, $|\mathcal{V}|\leq S_v$ and
$|\mathcal{W}|\leq S_w$, and having the joint distribution
$p(u,v,w,x,y,z)=p(u,v,w,x)q(y,z|x)$. Note that
${\mathcal{C}}_{M}^{S_u, S_v, S_w}(q(y, z|x))\subset
{\mathcal{C}}_{M}^{S'_u, S'_v, S'_w}(q(y, z|x))$ whenever $S_u\leq
S'_u$, $S_v\leq S'_v$ and $S_w\leq S'_w$.
\end{definition}

\begin{definition} Let $\mathscr{L}(q(y, z|x))$ be equal to ${\mathcal{C}}_{M}^{|\mathcal{X}|,
|\mathcal{X}|, |\mathcal{X}|+4}(q(y, z|x))$.
\end{definition}

\begin{definition}The region ${\mathscr{C}}(q(y, z|x))$ is defined as the union over
discrete random variables $U,V,W,X,Y,Z$ satisfying the cardinality
bounds $|\mathcal{U}|\leq |\mathcal{X}|$, $|\mathcal{V}|\leq
|\mathcal{X}|$ and $|\mathcal{W}|\leq |\mathcal{X}|+4$, and having
the joint distribution $p(u,v,w,x,y,z)=p(u,v,w,x)q(y,z|x)$ for which
$H(X|UVW)=0$, of non-negative triples $(R_0, R_1, R_2)$ satisfying
equations \ref{eqn:DefOfMKG1}, \ref{eqn:DefOfMKG2},
\ref{eqn:DefOfMKG3} and \ref{eqn:DefOfMKG4}. Please note that the
definition of ${\mathscr{C}}(q(y, z|x))$ differs from that of
$\mathscr{L}(q(y, z|x))$ since we have imposed the extra constraint
$H(X|UVW)=0$ on the auxiliaries. $\mathscr{C}(q(y, z|x))$ is a
\emph{computable} subset of the region ${\mathcal{C}}_{M}(q(y,
z|x))$. 
\end{definition}

\begin{definition} Given broadcast channel $q(y,z|x)$, let
${\mathcal{C}}_{NE}(q(y, z|x))$ denote the union over random
variables $U,V,W,X,Y,Z$, having the joint distribution
$p(u,v,w,x,y,z)=p(u)p(v)p(w|u,v)p(x|u,v,w)q(y,z|x)$, of
  \begin{eqnarray*}
    R_0, R_1, R_2&\geq& 0;\\
    R_0&\leq& \min(I(W;Y),I(W;Z));\\
    R_0+R_1&\leq& I(UW;Y);\\
    R_0+R_2&\leq& I(VW;Z);\\
    R_0+R_1+R_2&\leq&
I(UW;Y)+I(V;Z|UW);\\
    R_0+R_1+R_2&\leq&
I(VW;Z)+I(U;Y|VW).
  \end{eqnarray*}
${\mathcal{C}}_{NE}(q(y, z|x))$ is shown in \cite{Nair-El-Gamal} to
be an outer bound to the capacity region of the broadcast channel. This outer bound matches the best known outer bound discussed in \cite{NairNoteOuterBound} when $R_0=0$. An  alternative characterization of the set of triples $(0, R_1, R_2)$ in ${\mathcal{C}}_{NE}(q(y, z|x))$ is as follows \cite{NairZizhou}:
the union over random variables $U,V,X,Y,Z$ having the joint
distribution $p(u,v,x,y,z)=p(u,v,x)q(y,z|x)$, of
  \begin{eqnarray*}
    R_1, R_2&\geq& 0;\\
    R_1&\leq& I(U;Y);\\
    R_2&\leq& I(V;Z);\\
    R_1+R_2&\leq&
I(U;Y)+I(V;Z|U);\\
    R_1+R_2&\leq&
I(V;Z)+I(U;Y|V).
  \end{eqnarray*}
\end{definition}
%

\section{Statement of results}\label{Section:MainResults}
\begin{theorem}\label{Thm:Thm1} For any arbitrary broadcast channel $q(y,
z|x)$, the closure of ${\mathcal{C}}_{M}(q(y, z|x))$ is equal to
$\mathscr{C}(q(y, z|x))$.
\end{theorem}
\begin{corollary} ${\mathcal{C}}_{M}(q(y, z|x))$ is closed since $\mathscr{C}(q(y, z|x))$ is also a subset of ${\mathcal{C}}_{M}(q(y, z|x))$.\end{corollary}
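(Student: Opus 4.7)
The plan is to prove the theorem as two matched inclusions together with a closedness argument. The easy direction, $\mathscr{C}(q(y,z|x)) \subseteq \overline{\mathcal{C}_{M}}(q(y,z|x))$, is essentially free: $\mathscr{C}$ is obtained from $\mathcal{C}_{M}$ by further constraining the auxiliaries, so $\mathscr{C} \subseteq \mathcal{C}_{M} \subseteq \overline{\mathcal{C}_{M}}$. The substance of the theorem is the reverse inclusion together with the closedness of $\mathscr{C}$.

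For $\mathcal{C}_{M}(q(y,z|x)) \subseteq \overline{\mathscr{C}}(q(y,z|x))$, I would argue that every rate triple achievable in Marton's region can be approached by auxiliaries satisfying the four restrictions: $|\mathcal{U}|\leq|\mathcal{X}|$, $|\mathcal{V}|\leq|\mathcal{X}|$, $|\mathcal{W}|\leq|\mathcal{X}|+4$, and $H(X|UVW)=0$. It is cleaner to work with the six-dimensional region $\mathcal{C}_{M-I}$ (whose coordinates are the six mutual-information quantities on the right-hand sides of \eqref{eqn:DefOfMKG1}--\eqref{eqn:DefOfMKG4}) and to reduce cardinalities at extreme points of that region. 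A standard discretization reduces continuous auxiliaries to discrete ones of arbitrary finite cardinality at arbitrarily small cost in each of the six coordinates, so I may assume $U,V,W$ are discrete from the outset.

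The cardinality bound $|\mathcal{U}|\leq|\mathcal{X}|$ is the core of the proof and is obtained by extending Lemma~\ref{Lemma-1} to the setting with $W$. Conditioning on $W=w$ and assuming $|\mathrm{supp}(U|W=w)|>|\mathcal{X}|$, a dimension count produces a non-zero $L(u)$ supported on that fibre with $\mathbb{E}[L(U)|X,W=w]=0$; the perturbation of Section~\ref{Section:Description}, applied fibre-wise, preserves $p(x,y,z)$ and the conditional distributions $p(v,x,y,z|u,w)$. Invoking Lemma~\ref{Lemma3} to evaluate the first and second $\epsilon$-derivatives of each of the six mutual-information expressions, the same cancellations as in Lemma~\ref{Lemma-1} force each expression to be affine in $\epsilon$, so pushing $\epsilon$ to $\overline{\epsilon}_2$ or $-\overline{\epsilon}_1$ collapses a symbol of $U$ in that fibre without moving the six-tuple. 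Iterating and then relabelling across fibres yields $|\mathcal{U}|\leq|\mathcal{X}|$, and the symmetric argument gives $|\mathcal{V}|\leq|\mathcal{X}|$. The bound $|\mathcal{W}|\leq|\mathcal{X}|+4$ is then a Carath\'{e}odory-type reduction applied after $U$ and $V$ have been bounded: one must preserve the $|\mathcal{X}|-1$ free parameters of $p(x)$ together with the conditional mutual informations $I(U;Y|W)$, $I(V;Z|W)$, $I(U;V|W)$, $I(W;Y)$, $I(W;Z)$, which by the Liang--Kramer--Poor reduction contribute the remaining five dimensions. Finally, the determinism constraint $H(X|UVW)=0$ is imposed last by replacing $X$ by a deterministic function of $(U,V,W)$ through a Carath\'{e}odory reduction of the fibre distributions $p(x|u,v,w)$; a direct check shows this preserves every mutual-information quantity appearing in the six coordinates, since all of them condition on $W$ or factor through $(U,V,W)$.

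Closedness of $\mathscr{C}$ then follows from compactness: the set of joint distributions $p(u,v,w,x)$ obeying the finite cardinality bounds and the deterministic constraint $H(X|UVW)=0$ is a closed subset of a finite product of simplices, hence compact, and the six mutual-information maps are continuous on it, giving a closed image and therefore a closed down-set region in $\mathbb{R}_{\geq 0}^{3}$. I expect the main obstacle to be the bookkeeping required to apply Lemma~\ref{Lemma3} simultaneously to all six mutual-information expressions: the Fisher-information-type terms coming from $I(UW;Y)$, $I(VW;Z)$, and the sum $I(U;Y|W)+I(V;Z|W)-I(U;V|W)+\min(I(W;Y),I(W;Z))$ each decompose into several pieces, and one must identify which pieces cancel, which are forced to vanish by the constraint $\mathbb{E}[L|X,W]=0$, and which combine to make every expression affine in $\epsilon$; handling the minimum in \eqref{eqn:DefOfMKG1} and \eqref{eqn:DefOfMKG4} is likely to require splitting into the three Liang--Kramer--Poor cases.
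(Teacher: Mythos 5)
You have written (a sketch of) the proof of Theorem~\ref{Thm:Thm2} itself, not of the corollary. The corollary is a one-line consequence of the theorem: since the theorem gives $\overline{{\mathcal{C}}_{M}(q(y,z|x))}=\mathscr{C}(q(y,z|x))$, and since $\mathscr{C}(q(y,z|x))\subseteq {\mathcal{C}}_{M}(q(y,z|x))$ (every admissible choice of auxiliaries in the definition of $\mathscr{C}$ is also admissible in the definition of ${\mathcal{C}}_{M}$), one gets the sandwich $\overline{{\mathcal{C}}_{M}}=\mathscr{C}\subseteq {\mathcal{C}}_{M}\subseteq \overline{{\mathcal{C}}_{M}}$, forcing equality throughout and hence closedness of ${\mathcal{C}}_{M}$. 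Your proposal states the inclusion $\mathscr{C}\subseteq {\mathcal{C}}_{M}\subseteq\overline{{\mathcal{C}}_{M}}$ in its first paragraph, but the conclusion you actually draw at the end is that $\mathscr{C}$ is closed --- which is not the claim. The deduction that ${\mathcal{C}}_{M}$ itself is closed (i.e., that its closure is contained in it) is never made, and that deduction is the entire content of the corollary. It is an immediate step, but it is the step being asked for, so you should state it explicitly rather than stopping at the closedness of $\mathscr{C}$.

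As for the bulk of your write-up: re-deriving Theorem~\ref{Thm:Thm2} is unnecessary here (the corollary is entitled to cite it), and in any case your sketch follows the paper's own strategy --- discretization of continuous auxiliaries, the perturbation argument of Section~\ref{Section:Description} with Lemma~\ref{Lemma3} to bound $|\mathcal{U}|$ and $|\mathcal{V}|$ by $|\mathcal{X}|$, a Fenchel--Eggleston--Carath\'{e}odory reduction for $|\mathcal{W}|\leq|\mathcal{X}|+4$, and a convexity-of-the-objective argument for $H(X|UVW)=0$ --- so it is not a genuinely different route, merely a compressed restatement of the theorem's proof. The only thing missing for the corollary is the final sandwich argument above.
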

\begin{theorem}\label{Thm:Thm2} There are broadcast channels for which Marton's inner bound and the recent outer bound of Nair and El Gamal do not match.
\end{theorem}

\section{Description of the main technique}\label{Section:Description}
In this section, we demonstrate the main idea of the paper. In order
to show the essence of the proof while avoiding the unnecessary
details, we consider a simpler problem that is different from the
problem at hand, although it will be used in the later proofs. 

Given a broadcast channel $q(y,z|x)$ and an input distribution
$p(x)$, let us consider the problem of finding the supremum of
\begin{eqnarray*}&I(U;Y)+I(V;Z)-I(U;V)+\lambda I(U;Y)+\gamma
I(V;Z)\end{eqnarray*} over all joint distributions
$p(uv|x)p(x)q(y,z|x)$ where $\lambda$ and $\gamma$ are arbitrary
non-negative reals, and auxiliary random variables $U$, $V$ have
alphabet sets satisfying $|\mathcal{U}|\leq S_u$ and
$|\mathcal{V}|\leq S_v$ for some natural numbers $S_u$ and $S_v$.
For this problem, we would like to show that it suffices to take the
maximum over random variables $U$ and $V$ with the cardinality
bounds of $\min(|\mathcal{X}|, S_u)$ and $\min(|\mathcal{X}|, S_v)$.
It suffices to prove the following lemma:

\begin{lemma}\label{Lemma-1} Given an arbitrary broadcast channel $q(y,z|x)$, an
arbitrary input distribution $p(x)$, non-negative reals $\lambda$
and $\gamma$, and natural numbers $S_u$ and $S_v$ where
$S_u>|\mathcal{X}|$ the following holds:
\begin{eqnarray*}&\sup_{UV\rightarrow X\rightarrow YZ; |\mathcal{U}|\leq S_u; |\mathcal{V}|\leq S_v}I(U;Y)+I(V;Z)-I(U;V)+\lambda I(U;Y)+\gamma
I(V;Z)=\\&I(\widehat{U};\widehat{Y})+I(\widehat{V};\widehat{Z})-I(\widehat{U};\widehat{V})+\lambda
I(\widehat{U};\widehat{Y})+\gamma
I(\widehat{V};\widehat{Z}),\end{eqnarray*} where random variables
$\widehat{U}, \widehat{V}, \widehat{X}, \widehat{Y}, \widehat{Z}$
satisfy the following properties: the Markov chain
$\widehat{U}\widehat{V}\rightarrow \widehat{X}\rightarrow
\widehat{Y}\widehat{Z}$ holds; the joint distribution of
$\widehat{X}, \widehat{Y}, \widehat{Z}$ is the same as the joint
distribution of $X,Y,Z$, and furthermore
$|\mathcal{\widehat{U}}|<S_u$, $|\mathcal{\widehat{V}}|\leq S_v$.
\end{lemma}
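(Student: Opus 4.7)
The plan is to use the Fisher-information identity of Lemma~\ref{Lemma3} to construct a one-parameter perturbation of an optimizer along which the objective is exactly affine; moving along this perturbation to the boundary of the probability simplex then removes at least one letter of $\mathcal{U}$.

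By compactness the supremum is attained by some $p^{\ast}$; if its $\mathcal{U}$-support is already strictly smaller than $S_u$ we are done, so assume $|\mathrm{supp}\,p^{\ast}(U)|=S_u>|\mathcal{X}|$. A dimension count shows the linear space of $f\colon\mathcal{U}\to\mathbb{R}$ with $\sum_u p^{\ast}(u,x)f(u)=0$ for every $x$ has dimension at least $S_u-|\mathcal{X}|\ge 1$, so a nontrivial such $f$ exists. Consider the perturbation $p_\varepsilon(u,v,x)=p^{\ast}(u,v,x)(1+\varepsilon f(u))$: it preserves the marginal $p(x)$, leaves $p_\varepsilon(y|u)=p^{\ast}(y|u)$ and $p_\varepsilon(v|u)=p^{\ast}(v|u)$ unchanged, and is admissible for small $|\varepsilon|$. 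Setting $L=f(U)$, the Markov chain $UV\to X\to YZ$ forces $\mathbb{E}[L\mid X]=\mathbb{E}[L\mid Y]=\mathbb{E}[L\mid Z]=0$.

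Applying Lemma~\ref{Lemma3} term by term to $\phi(\varepsilon):=(1+\lambda)I_\varepsilon(U;Y)+(1+\gamma)I_\varepsilon(V;Z)-I_\varepsilon(U;V)$ yields
\[
\phi''(0)=(\log e)\,\Bigl((1+\gamma)\,\mathbb{E}\bigl[\mathbb{E}[L\mid V,Z]^{2}\bigr]-\gamma\,\mathbb{E}\bigl[\mathbb{E}[L\mid V]^{2}\bigr]\Bigr).
\]
The tower ordering $\mathbb{E}\bigl[\mathbb{E}[L\mid V,Z]^{2}\bigr]\ge\mathbb{E}\bigl[\mathbb{E}[L\mid V]^{2}\bigr]\ge 0$ together with $\gamma\ge 0$ makes this nonnegative, while local maximality of $p^{\ast}$ along the perturbation makes it nonpositive; therefore $\phi''(0)=0$ and \emph{both} quantities $\mathbb{E}\bigl[\mathbb{E}[L\mid V,Z]^{2}\bigr]$ and $\mathbb{E}\bigl[\mathbb{E}[L\mid V]^{2}\bigr]$ must vanish. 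Equivalently, at $p^{\ast}$ every admissible $f$ automatically satisfies $\mathbb{E}[f(U)\mid V]=0$ and $\mathbb{E}[f(U)\mid V,Z]=0$ almost surely. A direct check then gives $p_\varepsilon(v)=p^{\ast}(v)$ and $p_\varepsilon(v,z)=p^{\ast}(v,z)$ for all $\varepsilon$, so $I_\varepsilon(V;Z)$ is exactly constant; using $p_\varepsilon(y|u)=p^{\ast}(y|u)$, $p_\varepsilon(v|u)=p^{\ast}(v|u)$, $p_\varepsilon(u)=p^{\ast}(u)(1+\varepsilon f(u))$, both $I_\varepsilon(U;Y)$ and $I_\varepsilon(U;V)$ become exactly affine in $\varepsilon$ by direct substitution. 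Thus $\phi(\varepsilon)$ is affine, and $\phi'(0)=0$ at the maximum forces $\phi(\varepsilon)\equiv\phi(0)$.

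To finish, $f$ takes both signs because $\sum_u p^{\ast}(u)f(u)=0$ and $f\neq 0$; letting $\varepsilon^{\ast}$ be the first $\varepsilon>0$ at which $1+\varepsilon f(u^{\ast})=0$ for some $u^{\ast}$ in the $\mathcal{U}$-support produces a new optimizer $p_{\varepsilon^{\ast}}$ whose $\mathcal{U}$-support has strictly shrunk and whose $V$-marginal is unchanged, giving the required $\widehat{U},\widehat{V}$ with $|\widehat{\mathcal{U}}|<S_u$ and $|\widehat{\mathcal{V}}|\le S_v$. The main obstacle, and the step that no Carath\'{e}odory-style argument can reach, is the second-order Fisher-information identity that simultaneously makes $\phi''(0)$ nonnegative (by a data-processing ordering of conditional variances) and nonpositive (by optimality) and thereby extracts the pointwise orthogonality $\mathbb{E}[f(U)\mid V,Z]=0$; without this orthogonality $p_\varepsilon(v,z)$ would change with $\varepsilon$, $I_\varepsilon(V;Z)$ would be genuinely nonlinear, and the boundary-chasing step would strictly decrease $\phi$ rather than preserve it.
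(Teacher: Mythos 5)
Your proof is correct and follows essentially the same route as the paper's: perturb an optimizer along a nonzero $f(U)$ with $\mathbb{E}[f(U)\mid X]=0$ (which exists by dimension counting since $S_u>|\mathcal{X}|$), use the second-derivative/Fisher-information identity of Lemma~\ref{Lemma3} to conclude that the nonnegative quantity $\phi''(0)$ must vanish and hence $\mathbb{E}[f(U)\mid V,Z]=0$, deduce that the objective is affine and therefore constant along the perturbation, and push $\epsilon$ to the boundary of the admissible interval to shrink the support of $U$. The only cosmetic differences are that you specialize to $L=f(U)$ before differentiating, whereas the paper first records the second-order inequality for general $L(U,V,X)$ and then specializes, and that you establish affineness by direct substitution rather than via the $r(\cdot)$-expansion in part 3 of Lemma~\ref{Lemma3}.
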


\subsection{Proof based on the perturbation method}
Since the cardinalities of $U$ and $V$ are bounded,
one can show that the supremum of $I(U;Y)+I(V;Z)-I(U;V)+\lambda
I(U;Y)+\gamma I(V;Z)$ is a maximum\footnote{Since the ranges of all
the random variables are finite and the conditional
mutual information function is continuous, the set of admissible
joint probability distributions $p(u,v,x,y,z)$ where $I(UV;YZ|X)=0$
and $p(y,z,x)=q(y,z|x)p(x)$ will be a compact set (when viewed as a
subset of the Euclidean space). The fact that mutual information
function is continuous implies that the union over random variables
$U,V,X,Y,Z$ satisfying the cardinality bounds, having the joint
distribution $p(u,v,x,y,z)=p(u,v|x)p(x)q(y,z|x)$, of
$I(U;Y)+I(V;Z)-I(U;V)+\lambda I(U;Y)+\gamma I(V;Z)$ is a compact
set, and thus closed.}, and is obtained at some joint distribution
$p_0(u,v,x,y,z)=p_0(u,v,x)q(y,z|x)$. If $|\mathcal{U}|<S_u$, one can
finish the proof by setting $(\widehat{U}, \widehat{V}, \widehat{X},
\widehat{Y}, \widehat{Z}) =(U, V, X, Y, Z)$. One can also easily
show the existence of appropriate $(\widehat{U}, \widehat{V},
\widehat{X}, \widehat{Y}, \widehat{Z})$ if $p(u)=0$ for some $u\in
\mathcal{U}$. Therefore assume that $|\mathcal{U}|=S_u$ and
$p(u)\neq 0$ for all $u\in \mathcal{U}$. Take an arbitrary non-zero
function $L:\mathcal{U}\times \mathcal{V}\times
\mathcal{X}\rightarrow \mathbb{R}$ where $\mathbb{E}[L(U,V,X)|X]$=0.
Let us then perturb the joint distribution of $U,V,X,Y,Z$ by
defining random variables
$\widehat{U},\widehat{V},\widehat{X},\widehat{Y}$ and $\widehat{Z}$
distributed according to 
\begin{eqnarray*}&p_{\epsilon}(\widehat{U}=u, \widehat{V}=v,
\widehat{X}=x, \widehat{Y}=y,\widehat{Z}=z)=\\&p_{0}(U=u, V=v, X=x,
Y=y,Z=z)\cdot\big(1+\epsilon\cdot
\mathbb{E}[L(U,V,X)|U=u,V=v,X=x,Y=y,Z=z]\big),\end{eqnarray*}\normalsize
or equivalently according to
\begin{eqnarray*}&p_{\epsilon}(\widehat{U}=u, \widehat{V}=v,
\widehat{X}=x, \widehat{Y}=y,\widehat{Z}=z)=\\&p_{0}(U=u, V=v, X=x,
Y=y,Z=z)\big(1+\epsilon\cdot L(u,v,x)\big)=\\&p_{0}(U=u, V=v, X=x)q(
Y=y,Z=z|X=x)\big(1+\epsilon\cdot L(u,v,x)\big).\end{eqnarray*} The
parameter $\epsilon$ is a real number that can take values in
$[-\overline{\epsilon}_1, \overline{\epsilon}_2]$ where
$\overline{\epsilon}_1$ and $\overline{\epsilon}_2$ are some
positive reals representing the maximum and minimum values of
$\epsilon$, i.e. $\min_{u,v,x}1-\overline{\epsilon}_1\cdot
L(u,v,x)=\min_{u,v,x}1+\overline{\epsilon}_2\cdot L(u,v,x)=0$. Since
$L$ is a function of $U$, $V$ and $X$ only, for any value of
$\epsilon$, the Markov chain
$\widehat{U}\widehat{V}\rightarrow\widehat{X}\rightarrow\widehat{Y}\widehat{Z}$
holds, and $p(\widehat{Y}=y,\widehat{Z}=z|\widehat{X}=x)$ is equal
to $q(Y=y,Z=z|X=x)$ for all $x,y,z$ where $p(X=x)>0$. Furthermore
 $\mathbb{E}[L(U,V,X)|X]=0$ implies that the marginal
distribution of $X$ is preserved by this perturbation. This is
because
\begin{eqnarray*}&p_{\epsilon}(\widehat{X}=x)=p_{0}(X=x)\cdot\big(1+\epsilon\cdot
\mathbb{E}[L(U,V,X)|X=x]\big).\end{eqnarray*} This further implies
that the marginal distributions of $Y$ and $Z$ are also fixed.
\footnote{The terms $\mathbb{E}[L(U,V,X)|Y]=0$ and
$\mathbb{E}[L(U,V,X)|Z]=0$ must be zero if
$\mathbb{E}[L(U,V,X)|X]=0$}

The expression
$I(\widehat{U};\widehat{Y})+I(\widehat{V};\widehat{Z})-I(\widehat{U};\widehat{V})+\lambda
I(\widehat{U};\widehat{Y})+\gamma I(\widehat{V};\widehat{Z})$ as a
function of $\epsilon$ achieves its maximum at $\epsilon=0$ (by our
assumption). Therefore its first derivative at $\epsilon=0$ should
be zero, and its second derivative should be less than or equal to
zero. We use the following lemma to compute the first derivative and the second derivative of the above expression.
\begin{lemma}\label{Lemma3} Given any finite random variable $X$, and real
valued random variable $L$ where
$\big|\mathbb{E}[L|X=x]\big|<\infty$ for all $x\in \mathcal{X}$, and
$\mathbb{E}[L]=0$, let random variable $\widehat{X}$ be defined on
the same alphabet set as $X$ according to
$p_{\epsilon}(\widehat{X}=x)= p_0(X=x)\cdot \big(1+\epsilon\cdot
\mathbb{E}[L|X=x]\big),$ where $\epsilon$ is a real number in the
interval $[-\overline{\epsilon}_1, \overline{\epsilon}_2]$.
$\overline{\epsilon}_1$ and $\overline{\epsilon}_2$ are positive
reals for which $\min_{x}1-\overline{\epsilon}_1\cdot
\mathbb{E}[L|X=x]\geq 0$ and $\min_x1+\overline{\epsilon}_2\cdot
\mathbb{E}[L|X=x]\geq 0$ hold. Then
\begin{enumerate}
  \item $H(\widehat{X})\mid_{\epsilon=0}=H(X)$, and $\frac{\partial}{\partial
  \epsilon}H(\widehat{X})\mid_{\epsilon=0}=H_L(X)$ where $H_L(X)$ is defined as $H_L(X)=\sum_{x\in\mathcal{X}}p(X=x)\mathbb{E}[L|X=x]\log\frac{1}{p(X=x)}$ 
 for any finite random variable $X$ and real valued random variable $L$ where $\big|\mathbb{E}[L|X=x]\big|<\infty$ for all $x\in \mathcal{X}$.
  \item $\forall \epsilon \in (-\overline{\epsilon}_1, \overline{\epsilon}_2)$, $\frac{\partial^2}{\partial \epsilon^2}H(\widehat{X})=-\log e\cdot
\mathbb{E}\big[\frac{\mathbb{E}[L|X]^2}{1+\epsilon\cdot
\mathbb{E}[L|X]}\big]=-\log (e)\cdot I(\epsilon)$ where the Fisher
Information $I(\epsilon)$ is defined as
$I(\epsilon)=\sum_{x}\bigg(\frac{\partial}{\partial
\epsilon}\log_e\big(p_{\epsilon}(\widehat{X}=x)\big)\bigg)^2p_{\epsilon}(\widehat{X}=x).$
  In particular $\frac{\partial^2}{\partial
  \epsilon^2}H(\widehat{X})\mid_{\epsilon=0}=-\log e\cdot
\mathbb{E}\big[\mathbb{E}[L|X]^2\big]$.
\item $H(\widehat{X})=H(X)+\epsilon H_L(X)-\mathbb{E}\big[r\big(\epsilon\cdot
\mathbb{E}[L|X]\big)\big]$ where $r(x)=(1+x)\log(1+x)$.
\end{enumerate}
\end{lemma}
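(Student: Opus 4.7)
The plan is to prove Part 3 first as the master identity, since it gives a closed-form expression for $H(\widehat{X})$ as a function of $\epsilon$, and then derive Parts 1 and 2 by straightforward differentiation of this closed form. This avoids doing three separate computations.

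For Part 3, I would expand $H(\widehat{X}) = -\sum_x p_\epsilon(\widehat{X}=x) \log p_\epsilon(\widehat{X}=x)$ by substituting $p_\epsilon(\widehat{X}=x) = p_0(x)(1 + \epsilon\, \mathbb{E}[L|X=x])$ and splitting the logarithm as $\log p_0(x) + \log(1 + \epsilon\,\mathbb{E}[L|X=x])$. Distributing the $(1 + \epsilon\,\mathbb{E}[L|X=x])$ factor produces four terms. Three of them are immediately recognizable: $-\sum_x p_0(x)\log p_0(x) = H(X)$, a term proportional to $\epsilon$ that has a piece equal to $\epsilon H_L(X)$ and a piece equal to $-\epsilon\log e \cdot \sum_x p_0(x)\mathbb{E}[L|X=x]$, and the remaining piece collects into $-\sum_x p_0(x)(1+\epsilon\,\mathbb{E}[L|X=x])\log(1+\epsilon\,\mathbb{E}[L|X=x]) = -\mathbb{E}[r(\epsilon\,\mathbb{E}[L|X])]$. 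The only step that uses a hypothesis nontrivially is killing the stray $-\epsilon\log e \cdot \sum_x p_0(x)\mathbb{E}[L|X=x]$ term using $\mathbb{E}[L] = 0$. The domain conditions on $\overline{\epsilon}_1,\overline{\epsilon}_2$ ensure that all arguments of $\log$ are nonnegative, so no convergence subtleties arise (with the convention $0\log 0 = 0$).

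For Parts 1 and 2, I would differentiate the closed form from Part 3 term by term. Since $r(y)=(1+y)\log(1+y)$ has $r'(y) = \log(1+y) + \log e$ and $r''(y) = (\log e)/(1+y)$, the first derivative of $H(\widehat{X})$ is $H_L(X) - \mathbb{E}[\mathbb{E}[L|X]\cdot r'(\epsilon\,\mathbb{E}[L|X])]$; evaluating at $\epsilon=0$ gives $H_L(X) - \log e\cdot \mathbb{E}[\mathbb{E}[L|X]]$, and the second term vanishes because $\mathbb{E}[L]=0$, yielding Part 1. Differentiating once more gives the second derivative $-\mathbb{E}[\mathbb{E}[L|X]^2\, r''(\epsilon\,\mathbb{E}[L|X])] = -\log e\cdot \mathbb{E}\!\left[\frac{\mathbb{E}[L|X]^2}{1+\epsilon\,\mathbb{E}[L|X]}\right]$, which is the first formula in Part 2. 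To recognize this as Fisher information, I would compute $\frac{\partial}{\partial \epsilon}\log_e p_\epsilon(\widehat{X}=x) = \frac{\mathbb{E}[L|X=x]}{1+\epsilon\,\mathbb{E}[L|X=x]}$, square it, weight by $p_\epsilon(\widehat{X}=x) = p_0(x)(1+\epsilon\,\mathbb{E}[L|X=x])$, and observe that one factor of $(1+\epsilon\,\mathbb{E}[L|X=x])$ cancels, reproducing the same expression.

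There is no serious obstacle here; the result is essentially a careful bookkeeping exercise. The only subtlety is making sure to invoke $\mathbb{E}[L]=0$ at the right moment (rather than the stronger unused fact that $\mathbb{E}[L|X]$ could be nonzero) and keeping the logarithm base consistent: since all entropies are in base two, every differentiation of a $\log$ produces a factor of $\log e$, and these factors must be tracked to match the statements of the lemma.
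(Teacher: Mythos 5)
Your proposal is correct and follows essentially the same route as the paper: establish Part~3 by direct expansion of $-\sum_x p_\epsilon(x)\log p_\epsilon(x)$ after splitting the logarithm, then obtain Parts~1 and~2 by differentiating the closed form using $r'(y)=\log(1+y)+\log e$ and $r''(y)=\log e/(1+y)$, and verify the Fisher-information form by cancelling one factor of $1+\epsilon\,\mathbb{E}[L|X=x]$. One small bookkeeping correction: in the Part~3 expansion the $\epsilon$-linear piece $-\epsilon\sum_x p_0(x)\mathbb{E}[L|X=x]\log p_0(x)$ is exactly $\epsilon H_L(X)$ with no stray $-\epsilon\log e\cdot\sum_x p_0(x)\mathbb{E}[L|X=x]$ term to kill, so the hypothesis $\mathbb{E}[L]=0$ is first needed only when evaluating the first derivative in Part~1 (where you do invoke it correctly).
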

Using the above lemma, one can compute the first derivative
and set it to zero, and thereby get the following equation:
\begin{eqnarray}&I_L(U;Y)+I_L(V;Z)-I_L(U;V)+\lambda I_L(U;Y)+\gamma
I_L(V;Z)=0,\label{eqn:IDE1}\end{eqnarray}
where $I_L(X;Y)=H_L(X)-H_L(X|Y)=H_L(Y)-H_L(Y|X)$, $H_L(X|Y)=\sum_{y\in \mathcal{Y}}p(Y=y)H_L(X|Y=y),$ and $H_L(X|Y=y)=\sum_{x\in
\mathcal{X}}p(X=x|Y=y)\mathbb{E}[L|X=x,Y=y]\log\frac{1}{p(X=x|Y=y)}$ for any finite random variables $X$ and $Y$ and real valued random
variable $L$ where $\big|\mathbb{E}[L|X=x, Y=y]\big|<\infty$ for all $x\in \mathcal{X}$ and $y \in \mathcal{Y}$.

In order to compute the
second derivative, one can expand the expression through entropy terms
and use Lemma \ref{Lemma3} to compute the second derivative for each
term. We can use the assumption that $\mathbb{E}[L(U,V,X)|X]=0$
(which implies $\mathbb{E}[L(U,V,X)|Y]=0$ and
$\mathbb{E}[L(U,V,X)|Z]=0$) to simplify the expression. In
particular the second derivative of $H(\widehat{Y})$ and
$H(\widehat{Z})$ at $\epsilon=0$ would be equal to zero (as the
marginal distributions of $Y$ and $Z$ are preserved under the
perturbation), the second derivative of $I(\widehat{U};\widehat{Y})$
at $\epsilon=0$ will be equal to $-\log e\cdot
\mathbb{E}[\mathbb{E}[L(U,V,X)|U]^2]+\log e\cdot
\mathbb{E}[\mathbb{E}[L(U,V,X)|UY]^2]$, the second derivative of
$I(\widehat{V};\widehat{Z})$ at $\epsilon=0$ will be equal to $-\log
e\cdot \mathbb{E}[\mathbb{E}[L(U,V,X)|V]^2]+\log e\cdot
\mathbb{E}[\mathbb{E}[L(U,V,X)|VZ]^2]$, and the second derivative of
$-I(\widehat{U};\widehat{V})$ at $\epsilon=0$ will be equal to
$+\log e\cdot \mathbb{E}[\mathbb{E}[L(U,V,X)|U]^2]+\log e\cdot
\mathbb{E}[\mathbb{E}[L(U,V,X)|V]^2]-\log e\cdot
\mathbb{E}[\mathbb{E}[L(U,V,X)|UV]^2]$. Note that the second
derivatives of $I(\widehat{U};\widehat{Y})$ and
$I(\widehat{V};\widehat{Z})$ are always non-negative. Since the
second derivative of the expression
$I(\widehat{U};\widehat{Y})+I(\widehat{V};\widehat{Z})-I(\widehat{U};\widehat{V})+\lambda
I(\widehat{U};\widehat{Y})+\gamma I(\widehat{V};\widehat{Z})$ at
$\epsilon=0$ must be non-positive, the second derivative of
$I(\widehat{U};\widehat{Y})+I(\widehat{V};\widehat{Z})-I(\widehat{U};\widehat{V})$
must be non-positive at $\epsilon=0$. The second derivative of the
latter expression is equal to $+\log e\cdot
\mathbb{E}[\mathbb{E}[L(U,V,X)|UY]^2]+\log e\cdot
\mathbb{E}[\mathbb{E}[L(U,V,X)|VZ]^2]-\log e\cdot
\mathbb{E}[\mathbb{E}[L(U,V,X)|UV]^2]$. Hence we conclude that for
any non-zero function $L:\mathcal{U}\times \mathcal{V}\times
\mathcal{X}\rightarrow \mathbb{R}$ where $\mathbb{E}[L(U,V,X)|X]=0$
we must have:
\begin{eqnarray}\label{eqnation:lastchanges}&\mathbb{E}[\mathbb{E}[L(U,V,X)|UY]^2]+
\mathbb{E}[\mathbb{E}[L(U,V,X)|VZ]^2]-
\mathbb{E}[\mathbb{E}[L(U,V,X)|UV]^2]\leq 0.\end{eqnarray}

Next, take an arbitrary non-zero function $L':\mathcal{U}\rightarrow
\mathbb{R}$ where $\mathbb{E}[L'(U)|X]=0$. Since
$|\mathcal{U}|=S_u>|\mathcal{X}|$, such a non-zero function $L'$
exists. Note that the direction of perturbation $L'$ being only a
function of $U$ implies that
\begin{eqnarray*}&p_{\epsilon}(\widehat{U}=u, \widehat{V}=v,
\widehat{X}=x,
\widehat{Y}=y,\widehat{Z}=z)=\\&p_{\epsilon}(\widehat{U}=u)p_0(V=v,
X=x, Y=y,Z=z|U=u)\end{eqnarray*} In other words, the perturbation
only changes the marginal distribution of $U$, but preserves the
conditional distribution of $p_0(V=v, X=x, Y=y,Z=z|U=u)$.

Note that
\begin{eqnarray*}&\mathbb{E}[\mathbb{E}[L'(U)|UV]^2]=\mathbb{E}[\mathbb{E}[L'(U)|UY]^2]=\mathbb{E}[L'(U)^2].\end{eqnarray*}
This implies that $\mathbb{E}[\mathbb{E}[L'(U)|VZ]^2]$ should be
non-positive. But this can happen only when
$\mathbb{E}[L'(U)|VZ]=0$. Therefore any arbitrary function
$L':\mathcal{U}\rightarrow \mathbb{R}$ where $\mathbb{E}[L'(U)|X]=0$
must also satisfy $\mathbb{E}[L'(U)|VZ]=0$. In other words, any
arbitrary direction of perturbation $L'$ that is a function of $U$
and preserves the marginal distribution of $X$, must also preserve
the marginal distribution of $VZ$.\footnote{Note that
$p_{\epsilon}(\widehat{V}=v, \widehat{Z}=z )=p_{0}(V=v,
Z=z)\cdot\big(1+\epsilon\cdot \mathbb{E}[L(U,V,X)|V=v,
Z=z]\big)=p_{0}(V=v, Z=z)$.}

We next show that the expression
$I(\widehat{U};\widehat{Y})+I(\widehat{V};\widehat{Z})-I(\widehat{U};\widehat{V})+\lambda
I(\widehat{U};\widehat{Y})+\gamma I(\widehat{V};\widehat{Z})$ as a
function of $\epsilon$ is constant.\footnote{The authors would like
to thank Chandra Nair for suggesting this shortcut to simplify the
original proof.} Using the last part of Lemma \ref{Lemma3}, one can
write:
\begin{eqnarray}&I(\widehat{U};\widehat{Y})=I(U;Y)+\epsilon\cdot
I_L(\widehat{U};\widehat{Y})-\mathbb{E}\big[r\big(\epsilon\cdot
\mathbb{E}[L|U]\big)\big]-\mathbb{E}\big[r\big(\epsilon\cdot
\mathbb{E}[L|Y]\big)\big]+\mathbb{E}\big[r\big(\epsilon\cdot
\mathbb{E}[L|UY]\big)\big]=\nonumber\\&I(U;Y)+\epsilon\cdot
I_L(\widehat{U};\widehat{Y}),\label{eqn:IDE2}\end{eqnarray} where
$r(x)=(1+x)\log(1+x)$. Equation (\ref{eqn:IDE2}) holds because
$\mathbb{E}[L|Y]=0$ and $\mathbb{E}[L|U]=\mathbb{E}[L|UY]$.
Similarly using the last part of Lemma \ref{Lemma3}, one can write:
\begin{eqnarray}&I(\widehat{U};\widehat{V})=I(U;V)+\epsilon\cdot
I_L(\widehat{U};\widehat{V})-\mathbb{E}\big[r\big(\epsilon\cdot
\mathbb{E}[L|U]\big)\big]-\mathbb{E}\big[r\big(\epsilon\cdot
\mathbb{E}[L|V]\big)\big]+\mathbb{E}\big[r\big(\epsilon\cdot
\mathbb{E}[L|UV]\big)\big]=\nonumber\\&I(U;V)+\epsilon\cdot
I_L(\widehat{U};\widehat{V})\label{eqn:IDE2N}\end{eqnarray} where
$r(x)=(1+x)\log(1+x)$. Equation (\ref{eqn:IDE2N}) holds because
$\mathbb{E}[L|V]=0$ and $\mathbb{E}[L|U]=\mathbb{E}[L|UV]$. One can
similarly show that the term $I(\widehat{V};\widehat{Z})$ can be
written as $I(V;Z)+\epsilon\cdot I_L(\widehat{V};\widehat{Z})=0$.
Therefore the expression
$I(\widehat{U};\widehat{Y})+I(\widehat{V};\widehat{Z})-I(\widehat{U};\widehat{V})+\lambda
I(\widehat{U};\widehat{Y})+\gamma I(\widehat{V};\widehat{Z})$ as a
function of $\epsilon$ is equal to
\begin{eqnarray}&I(U;Y)+I(V;Z)-I(U;V)+\lambda I(U;Y)+\gamma
I(V;Z)+\nonumber\\&\epsilon\cdot
\big(I_L(U;Y)+I_L(V;Z)-I_L(U;V)+\lambda I_L(U;Y)+\gamma
I_L(V;Z)\big).\label{eqn:IDE3}\end{eqnarray} Equation
(\ref{eqn:IDE1}) implies that this expression is equal to
$I(U;Y)+I(V;Z)-I(U;V)+\lambda I(U;Y)+\gamma I(V;Z)$.

Therefore the expression
$I(\widehat{U};\widehat{Y})+I(\widehat{V};\widehat{Z})-I(\widehat{U};\widehat{V})+\lambda
I(\widehat{U};\widehat{Y})+\gamma I(\widehat{V};\widehat{Z})$ as a
function of $\epsilon$ is constant. Since the function $L'$ is
non-zero, setting $\epsilon=-\overline{\epsilon}_1$ or
$\epsilon=\overline{\epsilon}_2$ will result in a marginal
distribution on $\widehat{U}$ with a smaller support than $U$ since
the marginal distribution of $U$ is being perturbed as follows:
\begin{eqnarray*}&p_{\epsilon}(\widehat{U}=u)=p_{0}(U=u)\cdot\big(1+\epsilon
L'(u)\big).\end{eqnarray*} This perturbation does not increase the
support and would decrease it by at least one when $\epsilon$ is at
its maximum or minimum, i.e. when $\epsilon=-\overline{\epsilon}_1$
or $\epsilon=\overline{\epsilon}_2$. Therefore one is able to define
a random variable with a smaller cardinality as that of $U$ while
leaving the value of $I(U;Y)+I(V;Z)-I(U;V)+\lambda I(U;Y)+\gamma
I(V;Z)$ unaffected.

\emph{Discussion:} Aside from establishing cardinality bounds, the
above argument implies that if the maximum of
$I(U;Y)+I(V;Z)-I(U;V)+\lambda I(U;Y)+\gamma I(V;Z)$ is obtained at
some joint distribution $p_0(u,v,x,y,z)=p_0(u,v,x)q(y,z|x)$,
equations \ref{eqn:IDE1} and \ref{eqnation:lastchanges} must hold
for any non-zero function $L:\mathcal{U}\times \mathcal{V}\times
\mathcal{X}\rightarrow \mathbb{R}$ where $\mathbb{E}[L(U,V,X)|X]=0$.
The proof used these properties to a limited extent.

\subsection{Alternative proof}
In this subsection we provide an alternative proof for Lemma \ref{Lemma-1}. 
Assume that the maximum of $I(U;Y)+I(V;Z)-I(U;V)+\lambda
I(U;Y)+\gamma I(V;Z)$ is obtained at some joint distribution
$p_0(u,v,x,y,z)=p_0(u,v,x)q(y,z|x)$. Without loss of generality we can assume that $p(u)>0$ for all $u\in \mathcal{U}$. Let us fix $p_0(v,x|u)q(y,z|x)$ and vary the marginal distribution of $U$ in such a way that the marginal distribution of $X$ is preserved. In other words, we consider the set of p.m.f's $q(u)$ satisfying $\sum_{u,v}q(u)p_0(v,x|u)=p_0(x)$ for all $x\in \mathcal{X}$. We can then view the expression $I(U;Y)+I(V;Z)-I(U;V)+\lambda
I(U;Y)+\gamma I(V;Z)$ as a function of a p.m.f $q(u)$ defined on $\mathcal{U}$. Here $U,V,X,Y,Z$ are jointly distributed as $q(u)p_0(v,x|u)q(y,z|x)$. We claim that $I(U;Y)+I(V;Z)-I(U;V)+\lambda
I(U;Y)+\gamma I(V;Z)$ is convex function over $q(u)$. To see this note that $I(U;Y)+I(V;Z)-I(U;V)=H(Y)-H(Y|U)-H(V|Z)+H(V|U)$. Since the marginal distribution of $X$ is preserved, $H(Y)$ is fixed. The term $-H(Y|U)+H(V|U)$ is linear in $q(u)$, and $-H(V|Z)$ is convex in $q(u)$. Therefore $I(U;Y)+I(V;Z)-I(U;V)$ is a convex function over $q(u)$. Next, note that $\lambda
I(U;Y)=\lambda H(Y)-\lambda H(Y|U)$ is linear in $q(u)$, and $\gamma I(V;Z)=\gamma H(Z)-\gamma H(Z|V)$ is convex in $q(u)$. The latter is because the marginal distribution of $X$ is preserved and hence $H(Z)$ is fixed. All in all, we can conclude that  $I(U;Y)+I(V;Z)-I(U;V)+\lambda
I(U;Y)+\gamma I(V;Z)$ is convex in $q(u)$. This implies that it will have a maximum at the extreme points of the domain. We claim that any extreme point of the domain corresponds to a p.m.f $q(u)$ with support at most $|\mathcal{X}|$. This completes the proof. The domain of the function is the polytope formed by the set of vectors $(q(u): u\in \mathcal{U})$ satisfying the following constraints
\begin{align*}
&q(u)\geq 0,~~~~ \forall u\in \mathcal{U}\\&
\sum_{u\in \mathcal{U}}q(u)=1\\&
\sum_{u,v}q(u)p_0(v,x|u)=p_0(x),~~~~ \forall x\in \mathcal{X}
\end{align*}
Note that the equation $\sum_{u\in \mathcal{U}}q(u)=1$ is redundant and implied by the others because 
$1=\sum_{x}p_0(x)=\sum_{x}\sum_{u,v}q(u)p_0(v,x|u)=\sum_{u}\sum_{v,x}q(u)p_0(v,x|u)=\sum_{u}q(u)$. Thus, we can describe the domain of the function by
\begin{align*}
&q(u)\geq 0,~~~~ \forall u\in \mathcal{U}\\&
\sum_{u,v}q(u)p_0(v,x|u)=p_0(x),~~~~ \forall x\in \mathcal{X}
\end{align*}
Any extreme point of this polytope must lie on at least $|\mathcal{U}|$ hyperplanes because the polytope lies in $\mathbb{R}^{|\mathcal{U}|}$. Because there are $|\mathcal{X}|$ equations of the type $\sum_{u,v}q(u)p_0(v,x|u)=p_0(x)$, any extreme point has to pick up at least $|\mathcal{U}|-|\mathcal{X}|$ equation of the type $q(u)\geq 0$. This implies that $q(u)=0$ for at least $|\mathcal{U}|-|\mathcal{X}|$ different values of $u\in \mathcal{U}$. Therefore the support of any extreme point must be less than or equal to $|\mathcal{U}|-(|\mathcal{U}|-|\mathcal{X}|)=|\mathcal{X}|$.

\section{Proofs}\label{Section:Proofs}
\begin{proof}[Proof of Theorem \ref{Thm:Thm1}]
We begin by showing that for any natural numbers $S_u, S_v, S_w$, one has ${\mathcal{C}}_{M}^{S_u, S_v, S_w}(q(y, z|x))\subset
{\mathcal{C}}_{M}^{|\mathcal{X}|, |\mathcal{X}|, |\mathcal{X}|+4}(q(y, z|x))={\mathscr{L}}(q(y, z|x))$. This is proved in two steps:
\begin{enumerate}
  \item ${\mathcal{C}}_{M}^{S_u, S_v, S_w}(q(y, z|x))\subset
{\mathcal{C}}_{M}^{S_u, S_v, |\mathcal{X}|+4}(q(y, z|x))$.
  \item ${\mathcal{C}}_{M}^{S_u, S_v, |\mathcal{X}|+4}(q(y, z|x))\subset
{\mathcal{C}}_{M}^{|\mathcal{X}|, |\mathcal{X}|, |\mathcal{X}|+4}(q(y, z|x))$.
\end{enumerate}
The first step that imposes a cardinality bound on the alphabet set of $W$ follows just from a standard application of the strengthened Carath\'{e}odory theorem of Fenchel and is left to the reader. The difficult part is the second step. To show this it suffices to prove more generally that \begin{eqnarray}&{\mathcal{C}}_{M-I}^{S_u, S_v, |\mathcal{X}|+4}(q(y, z|x))\subset
{\mathcal{C}}_{M-I}^{|\mathcal{X}|, |\mathcal{X}|, |\mathcal{X}|+4}(q(y, z|x))\label{eqn:MP1}\end{eqnarray} where ${\mathcal{C}}_{M-I}^{S_u, S_v, S_w}$ is defined as the union of real four tuples $(R'_1, R'_2, R'_3, R'_4)$ satisfying
  \begin{eqnarray}
    R'_1&\leq& \min(I(W;Y),I(W;Z));\label{eqn:DefOfMKGn1}\\
    R'_2&\leq& I(UW;Y);\label{eqn:DefOfMKGn2}\\
    R'_3&\leq& I(VW;Z);\label{eqn:DefOfMKGn3}\\
    R'_4&\leq&
I(U;Y|W)+I(V;Z|W)-I(U;V|W)\nonumber\\&&+\min(I(W;Y),I(W;Z)).\label{eqn:DefOfMKGn4}
  \end{eqnarray}
over auxiliary random variables satisfying the cardinality bounds
$|\mathcal{U}|\leq S_u$, $|\mathcal{V}|\leq S_v$ and
$|\mathcal{W}|\leq S_w$. Note that the region ${\mathcal{C}}_{M-I}^{S_u, S_v, S_w}$ specifies ${\mathcal{C}}_{M}^{S_u, S_v, S_w}$, since given any $p(u,v,w,x,y,z)=p(u,v,w,x)q(y,z|x)$ the
corresponding vector in ${\mathcal{C}}_{M-I}^{S_u, S_v, S_w}$ is providing
the values for the right hand side of the 4 inequalities that define
the region ${\mathcal{C}}_{M}^{S_u, S_v, S_w}$. Also note that $\mathcal{C}_{M-I}(q(y,
z|x))$ is defined as a subset of $\mathbb{R}^4$, and not
$\mathbb{R}_{+}^4$.

It is proved in Appendix \ref{sec:apndxNew2} that ${\mathcal{C}}_{M-I}^{S_u, S_v, |\mathcal{X}|+4}(q(y, z|x))$ is convex and closed for any $S_u$ and $S_v$. Thus, to prove equation (\ref{eqn:MP1}) it suffices to show that for any real $\lambda_1, \lambda_2, \lambda_3, \lambda_4$,
$$\max_{(R'_1,R'_2,R'_3,R'_4)\in {\mathcal{C}}_{M-I}^{S_u, S_v, |\mathcal{X}|+4}}\sum_{i=1:4}\lambda_iR'_i\leq
\max_{(R'_1,R'_2,R'_3,R'_4)\in {\mathcal{C}}_{M-I}^{|\mathcal{X}|, |\mathcal{X}|, |\mathcal{X}|+4}}\sum_{i=1:4}\lambda_iR'_i.$$
It suffices to prove this for the case of $\lambda_i\geq 0$ for $i=1:4$,
since if $\lambda_i$ is negative for some $i$, $R'_i$ can be made to converge to
$-\infty$ causing $\sum_{i=1}^4\lambda_iR'_i$ to converge to
$\infty$ on both sides. 

Take a point $(R'_1,R'_2,R'_3,R'_4)\in {\mathcal{C}}_{M-I}^{S_u, S_v, |\mathcal{X}|+4}$ that maximizes $\sum_{i=1:4}\lambda_iR'_i$. Corresponding to the point is a joint distribution $p(u,v,w,x)$ where $|U|\leq S_u$, $|V|\leq S_v$ and $|W|\leq |\mathcal{X}|+4$ and \begin{align*}\sum_{i=1:4}\lambda_iR'_i=&\lambda_1\min(I(W;Y),I(W;Z))+\lambda_2 I(UW;Y)+\lambda_3I(VW;Z)\\&+\lambda_4 \big(\min(I(W;Y),I(W;Z))+ I(U;Y|W)+I(V;Z|W)-I(U;V|W)\big).\end{align*} Let us fix $p(w,x)$. We would like to define $p(\widehat{u},\widehat{v}|w,x)$ such that $|\widehat{U}|\leq |X|$, $|\widehat{V}|\leq |X|$ achieving the same or larger weighted sum. Because we have fixed $p(w,x)$, the terms $I(W;Y)$ and $I(W;Z)$ are fixed. Since $I(UW;Y)=I(W;Y)+\sum_{w}p(w)I(U;Y|W=w)$, $I(VW;Z)=I(W;Z)+\sum_{w}p(w)I(V;Z|W=w)$ and $I(U;Y|W)+I(V;Z|W)-I(U;V|W)=\sum_{w}p(w)[I(U;Y|W=w)+I(V;Z|W=w)-I(U;V|W=w)]$, we can construct $p(\widehat{u},\widehat{v},x|w)$ for each $w$ individually. In other words, given the marginal distribution $p(x|w)$, we would like to construct $p(\widehat{u},\widehat{v},x|w)$ such that 
\begin{align*}&\lambda_2 I(U;Y|W=w)+\lambda_3I(V;Z|W=w)+\lambda_4 \big(I(U;Y|W=w)+I(V;Z|W=w)-I(U;V|W=w)\big)\leq\\&
\lambda_2 I(\widehat{U};Y|W=w)+\lambda_3I(\widehat{V};Z|W=w)+\lambda_4 \big(I(\widehat{U};Y|W=w)+I(\widehat{V};Z|W=w)-I(\widehat{U};\widehat{V}|W=w)\big).\end{align*}
When $\lambda_4>0$, after a normalization we get the problem studied in section \ref{Section:Description}. When $\lambda_4=0$, clearly $\widehat{U}=\widehat{V}=X$ works. This completes the proof. Thus, we have proved that for any arbitrary natural numbers $S_u, S_v, S_w$, one has ${\mathcal{C}}_{M}^{S_u, S_v, S_w}(q(y, z|x))\subset
{\mathcal{C}}_{M}^{|\mathcal{X}|, |\mathcal{X}|, |\mathcal{X}|+4}(q(y, z|x))={\mathscr{L}}(q(y, z|x))$.

We now complete the proof of the theorem. 
In Appendices \ref{sec:apndxL} and \ref{sec:apndxM}, we prove that
the closure of ${\mathcal{C}}_{M}(q(y, z|x))$ is equal to the
closure of $\bigcup_{S_u, S_v, S_w\geq 0}{\mathcal{C}}_{M}^{S_u,
S_v, S_w}(q(y, z|x))$, and that ${\mathscr{C}}(q(y, z|x))$ is equal
to ${\mathscr{L}}(q(y, z|x))$. Using the result that ${\mathcal{C}}_{M}^{S_u, S_v, S_w}(q(y, z|x))\subset
{\mathcal{C}}_{M}^{|\mathcal{X}|, |\mathcal{X}|, |\mathcal{X}|+4}(q(y, z|x))={\mathscr{L}}(q(y, z|x))$, we get that the closure of ${\mathcal{C}}_{M}(q(y, z|x))$ is equal to the
closure of ${\mathscr{L}}(q(y, z|x))$. Lastly note that $\mathscr{L}(q(y, z|x))$ is closed because of the cardinality constraints on its auxiliary random variables.\footnote{Since the ranges of all the involved
random variables are limited and the conditional mutual information
function is continuous, the set of admissible joint probability
distributions $p(u,v,w,x,y,z)$ where $I(UVW;YZ|X)=0$ and
$p(y,z|x)=q(y,z|x)$ will be a compact set (when viewed as a subset
of the ambient Euclidean space). The fact that mutual information function
is continuous implies that the Marton region defined by taking the union over random variables
$U,V,W,X,Y,Z$ satisfying the cardinality bounds is a compact set, and thus closed.}
\end{proof}

\begin{proof}[Proof of Theorem \ref{Thm:Thm2}] We construct a broadcast channel with binary input alphabet for which Marton's inner bound and the recent outer bound of Nair and El Gamal do not match.

We begin by proving that for any arbitrary binary input broadcast
channel $q(y,z|x)$ such that for all $y\in \mathcal{Y}$ and $z\in
\mathcal{Z}$, $q(Y=y|X=0)$, $q(Y=y|X=1)$, $q(Z=z|X=0)$ and
$q(Z=z|X=1)$ are non-zero, the following holds: 

\emph{Lemma:} If $\mathcal{C}_{M}(q(y,
z|x))=\mathcal{C}_{NE}(q(y, z|x))$, the maximum sum rate
$R_1+R_2$ over triples $(R_0, R_1, R_2)$ in Marton's inner bound
is equal to
\begin{eqnarray}&\max\bigg(\min_{\gamma\in
[0,1]}\big(\max_{\scriptsize{\begin{array}{l}
              p(wx)q(y,z|x)\\
              |\mathcal{W}|=2
            \end{array}}}
\gamma I(W;Y)+ (1-\gamma)I(W;Z)+\sum_{w}p(w)T(p(X=1|W=w))\big),
\nonumber
\\& \max_{\scriptsize{\begin{array}{l}
              p(u,v)p(x|uv)q(y,z|x)\\
              |\mathcal{U}|=|\mathcal{V}|=2, I(U;V)=0, H(X|UV)=0
            \end{array}}}
I(U;Y)+I(V;Z)\bigg),\label{eqnNW1:Thm3}\end{eqnarray} where $T(p)=
\max\big\{I(X;Y), I(X;Z)|P(X=1)=p\big\}$.

Before proceeding to prove the above lemma, note that if the expression of equation \ref{eqnNW1:Thm3} turns out to be strictly less than the maximum of
the sum rate $R_1+R_2$ over triples $(R_0, R_1, R_2)$ in
$\mathcal{C}_{NE}(q(y, z|x))$ (which is given in \cite{NairZizhou}),
it will serve as an evidence for $\mathcal{C}_{M}(q(y, z|x))\neq
\mathcal{C}_{NE}(q(y, z|x))$. The maximum of the sum rate $R_1+R_2$
over triples $(R_0, R_1, R_2)$ in $\mathcal{C}_{NE}(q(y, z|x))$ is
known to be \cite{NairZizhou}
$$\max_{\scriptsize{\begin{array}{l}
              p(u,v,x)q(y,z|x)\\
            \end{array}}}
\min \big(I(U;Y)+I(V;Z), I(U;Y)+I(V;Z|U), I(V;Z)+I(U;Y|V)\big),$$
which can be written as (see Bound 4 in \cite{NairZizhou})
$$\max_{\scriptsize{\begin{array}{l}
              p(u,v,x)q(y,z|x)\\
              |\mathcal{U}|=|\mathcal{V}|=3, I(U;V|X)=0
            \end{array}}}
\min \big(I(U;Y)+I(V;Z), I(U;Y)+I(X;Z|U), I(V;Z)+I(X;Y|V)\big).$$
The constraint $I(U;V|X)=0$ is imposed because the outer bound depends only on the marginals $p(u,x)$ and $p(v,x)$.
There are examples for which the expression of equation
\ref{eqnNW1:Thm3} turns out to be strictly less than the maximum of
the sum rate $R_1+R_2$ over triples $(R_0, R_1, R_2)$ in
$\mathcal{C}_{NE}(q(y, z|x))$. For instance given any two positive
reals $\alpha$ and $\beta$ in the interval $(0,1)$, consider the
broadcast channel for which
$|\mathcal{X}|=|\mathcal{Y}|=|\mathcal{Z}|=2$, $p(Y=0|X=0)=\alpha$,
$p(Y=0|X=1)=\beta$, $p(Z=0|X=0)=1-\beta$, $p(Z=0|X=1)=1-\alpha$.
Assuming $\alpha=0.01$, Figure \ref{fig:AlphaOnePercent} plots
maximum of the sum rate for $C_{NE}(q(y,z|x))$, and maximum of the
sum rate for $C_{M}(q(y,z|x))$ (assuming that $C_{NE}(q(y,z|x)) =
C_{M}(q(y,z|x))$) as a function of $\beta$. Where the two curves do
not match, Nair and El Gamal's outer bound and Marton's inner bound
can not be equal for the corresponding broadcast channel.
\begin{figure}
\centering
\includegraphics[width=110mm]{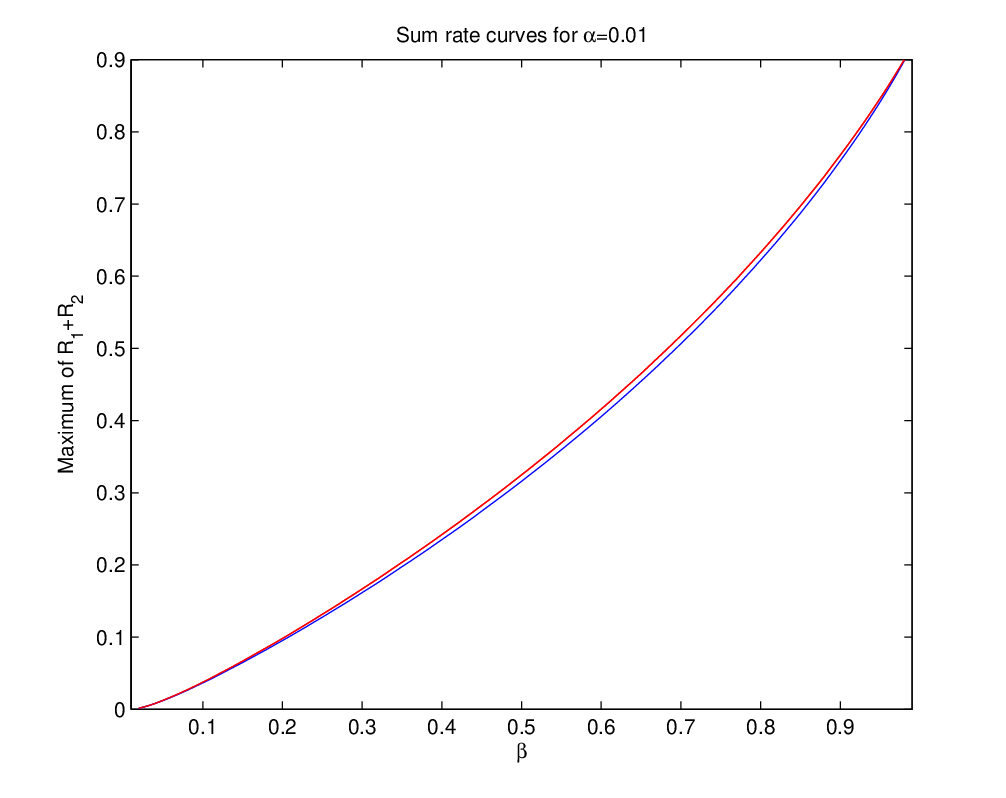}
\caption{Red curve (top curve): sum rate for $C_{NE}(q(y,z|x))$;
Blue curve (bottom curve): sum rate for $C_{M}(q(y,z|x))$ assuming
that $C_{NE}(q(y,z|x)) =
C_{M}(q(y,z|x))$.}\label{fig:AlphaOnePercent}
\end{figure}

\emph{Proof of the lemma:} The maximum of the sum rate $R_1+R_2$ over triples $(R_0, R_1, R_2)$ in
$\mathcal{C}_{M}(q(y, z|x))$ is equal to
\begin{equation}\label{eqn:PThm41}\max_{\scriptsize{\begin{array}{l}
              p(u,v,w,x)q(y,z|x)\\
              |\mathcal{U}|=2, |\mathcal{V}|=2\\
              H(X|UVW)=0
            \end{array}}}
I(U;Y|W)+I(V;Z|W)-I(U;V|W)+\min(I(W;Y), I(W;Z)).\end{equation} The
proof consists of two parts: first we show that the above expression
is equal to the following expression:
\begin{equation}\label{eqn:PThm42}\max\bigg(\max_{\scriptsize{\begin{array}{l}
              p(wx)q(y,z|x)
            \end{array}}}
\min\big(I(W;Y), I(W;Z)\big)+\sum_{w}p(w)T(p(X=1|W=w)),
\end{equation}$$ \max_{\scriptsize{\begin{array}{l}
              p(u,v)p(x|uv)q(y,z|x)\\
              |\mathcal{U}|=|\mathcal{V}|=2, I(U;V)=0, H(X|UV)=0
            \end{array}}}
I(U;Y)+I(V;Z)\bigg).$$ Next, we show that the expression of equation
\ref{eqn:PThm42} is equal to the the expression given in the lemma.

The expression of equation \ref{eqn:PThm41} is greater than or equal
to the expression of equation \ref{eqn:PThm42}.\footnote{Consider
the following special cases: 1) given $W=w$, let $(U,V)=(X,
constant)$ if $I(X;Y|W=w)\geq I(X;Z|W=w)$, and $(U,V)=(constant, X)$
otherwise. This would produce the first part of the expression given
in the lemma. 2) Assume that $W$ is constant, and $U$
is independent of $V$. This would produce the second part of the
expression given in the lemma.} For the first part of
the proof we thus need to prove that the expression of equation
\ref{eqn:PThm41} is less than or equal to the expression of equation
\ref{eqn:PThm42}. Take the joint distribution $p(u,v,w,x)$ that
maximizes the expression of equation \ref{eqn:PThm41}. Let
$\widetilde{U}=(U,W)$ and $\widetilde{V}=(V,W)$. The maximum of the sum
rate $R_1+R_2$ over triples $(R_0, R_1, R_2)$ in
$\mathcal{C}_{NE}(q(y, z|x))$ is greater than or equal to $\min
\big(I(\widetilde{U};Y)+I(\widetilde{V};Z),
I(\widetilde{U};Y)+I(\widetilde{V};Z|\widetilde{U}),
I(\widetilde{V};Z)+I(\widetilde{U};Y|\widetilde{V})\big)$ (see Bound
3 in \cite{NairZizhou}). Since $\mathcal{C}_{NE}(q(y,
z|x))=\mathcal{C}_{M}(q(y, z|x))$, we must have: $$\min
\big(I(UW;Y)+I(VW;Z), I(UW;Y)+I(VW;Z|UW),
I(UW;Z)+I(UW;Y|VW)\big)\leq $$$$
I(U;Y|W)+I(V;Z|W)-I(U;V|W)+\min(I(W;Y), I(W;Z)).$$ Or alternatively
$$\min \bigg(\max(I(W;Y),I(W;Z))+I(U;V|W),$$$$ I(W;Y)-\min(I(W;Y),
I(W;Z))+I(U;V|WZ),$$$$ I(W;Z)-\min(I(W;Y),
I(W;Z))+I(U;V|WY)\bigg)\leq 0.$$ Since each expression is also
greater than or equal to zero, at least one of the three terms must be equal
to zero. Therefore at least one of the following must hold:
\begin{enumerate}
  \item $I(W;Y)=I(W;Z)=0$ and $I(U;V|W)=0$,
  \item $I(U;V|WY)=0$,
  \item $I(U;V|WZ)=0$.
\end{enumerate}
If (1) holds, $I(U;Y|W)+I(V;Z|W)-I(U;V|W)+\min(I(W;Y), I(W;Z))$
equals $I(U;Y|W)+I(V;Z|W)$. Suppose $\max_{w:
p(w)>0}I(U;Y|W=w)+I(V;Z|W=w)$ occurs at some $w^*$. Clearly
$I(U;Y|W)+I(V;Z|W)\leq I(U;Y|W=w^*)+I(V;Z|W=w^*)$. Let $\widehat{U},
\widehat{V},\widehat{X}, \widehat{Y}$ and $\widehat{Z}$ be
distributed according to $p(u,v,x,y,z|w^*)$.
$I(\widehat{U};\widehat{V})=I(U;V|W=w^*)=0$. Therefore
$I(U;Y|W)+I(V;Z|W)-I(U;V|W)+\min(I(W;Y), I(W;Z))$ is less than or
equal to $$\max_{\scriptsize{\begin{array}{l}
              p(u,v)p(x|uv)q(y,z|x)\\
              |\mathcal{U}|=|\mathcal{V}|=2, I(U;V)=0, H(X|UV)=0
            \end{array}}}
I(U;Y)+I(V;Z).$$

Next assume (2) or (3) holds, i.e. $I(U;V|WY)=0$ or $I(U;V|WZ)=0$.
We show in Appendix \ref{sec:apndxVI} that for any value of $w$
where $p(w)>0$, either $I(U;V|W=w, Y)=0$ or $I(U;V|W=w, Z)=0$ imply
that $I(U;Y|W=w)+I(V;Z|W=w)-I(U;V|W=w)\leq T(p(X=1|W=w))$. Therefore
$I(U;Y|W)+I(V;Z|W)-I(U;V|W)+\min(I(W;Y), I(W;Z))\leq \min(I(W;Y),
I(W;Z))+\sum_{w}p(w)T(p(X=1|W=w))$. This in turn implies that
$I(U;Y|W)+I(V;Z|W)-I(U;V|W)+\min(I(W;Y), I(W;Z))$ is less than or
equal to $$\max_{\scriptsize{\begin{array}{l}
              p(w,x)q(y,z|x)\end{array}}}
\min(I(W;Y), I(W;Z))+\sum_{w}p(w)T(p(X=1|W=w)).$$ This completes the
first part of the proof.

Next, we would like to show that the expression of equation
\ref{eqn:PThm42} is equal to the the expression given in the lemma. In order to show this, note that (see Observation 1 of \cite{GohariElGamalAnantharam})
\begin{equation}\label{eqn:PThm43}\max_{\scriptsize{\begin{array}{l}
              p(w,x)q(y,z|x)\end{array}}}
\min(I(W;Y), I(W;Z))+\sum_{w}p(w)T(p(X=1|W=w))\end{equation} is
equal to
\begin{equation}\label{eqn:PThm44}\min_{\gamma\in
[0,1]}\big(\max_{\scriptsize{\begin{array}{l}
              p(wx)q(y,z|x)\\
              |\mathcal{W}|=2
            \end{array}}}
\gamma I(W;Y)+
(1-\gamma)I(W;Z)+\sum_{w}p(w)T(p(X=1|W=w))\big).\end{equation}

The expression given in equation \ref{eqn:PThm43} can be written as
$$\max_{\scriptsize{\begin{array}{l}
              p(w,x)q(y,z|x)\end{array}}}
\min\big(I(W;Y)+\sum_{w}p(w)T(p(X=1|W=w)),
I(W;Z)+\sum_{w}p(w)T(p(X=1|W=w))\big).$$ This expression can be rewritten as
$$\min_{\gamma\in
[0,1]}\big(\max_{\scriptsize{\begin{array}{l}
              p(wx)q(y,z|x)\\
            \end{array}}}
\gamma I(W;Y)+ (1-\gamma)I(W;Z)+\sum_{w}p(w)T(p(X=1|W=w))\big).$$ It
remains to prove the cardinality bound of two on $W$. This is done
using the strengthened Carath\'{e}odory theorem of Fenchel. Take an arbitrary $p(w,x)q(y,z|x)$. The vector
$w\rightarrow p(W=w)$ belongs to the set of vectors $w\rightarrow
p(\widetilde{W}=w)$ satisfying the constraints
$\sum_{w}p(\widetilde{W}=w)=1$, $p(\widetilde{W}=w)\geq 0$ and
$p(X=1)=\sum_wp(X=1|W=w)p(\widetilde{W}=w)$. The first two
constraints ensure that $w\rightarrow p(\widetilde{W}=w)$
corresponds to a probability distribution, and the third constraint
ensures that one can define a random variable $\widetilde{W}$, jointly
distributed with $X$, $Y$ and $Z$ according to $p(\widetilde{w},
x)q(y,z|x)$ and further satisfying
$p(X=x|\widetilde{W}=w)=p(X=x|W=w)$. Since $w\rightarrow p(W=w)$
belongs to the above set, it can be written as the convex
combination of some of the extreme points of this set. The
expression $\sum_w[-(1-\gamma)H(Z|W=w)-\gamma H(Y|W=w)+
T(p(X=1|W=w))]p(\widetilde{W}=w)$ is linear in $p(\widetilde{W}=w)$,
therefore this expression for $w\rightarrow p(W=w)$ is less than or
equal to the corresponding expression for at least one of these
extreme points. On the other hand, every extreme point of the set of
vectors $w\rightarrow p(\widetilde{W}=w)$ satisfying the constraints
$\sum_{w}p(\widetilde{W}=w)=1$, $p(\widetilde{W}=w)\geq 0$ and
$p(X=1)=\sum_wp(X=1|W=w)p(\widetilde{W}=w)$ satisfies the property
that $p(\widetilde{W}=w)\neq 0$ for at most two values of $w \in
\mathcal{W}$. Thus a cardinality bound of two is established.
\end{proof}

\begin{proof}[Proof of Lemma \ref{Lemma3}]
The equation $H(\widehat{X})=H(X)+\epsilon
H_L(X)-\mathbb{E}\big[r\big(\epsilon\cdot \mathbb{E}[L|X]\big)\big]$
where $r(x)=(1+x)\log(1+x)$ is true because:
\begin{eqnarray*}&H(\widehat{X})=-\sum_{\widehat{x}}p_{\epsilon}(\widehat{x})\log
p_{\epsilon}(\widehat{x})\\&=-\sum_{\widehat{x}}p_{0}(\widehat{x})\big(1+\epsilon\cdot
\mathbb{E}[L|X=\widehat{x}]\big)\cdot \log\bigg(
p_{0}(\widehat{x})\cdot\big(1+\epsilon\cdot
\mathbb{E}[L|X=\widehat{x}]\big)\bigg)\\&=-\sum_{\widehat{x}}p_{0}(\widehat{x})\big(1+\epsilon\cdot
\mathbb{E}[L|X=\widehat{x}] \big)\cdot \bigg[\log\bigg(
p_{0}(\widehat{x})\bigg)+\log\bigg(1+\epsilon\cdot
\mathbb{E}[L|X=\widehat{x}]\bigg)\bigg]\\&=H(X)-\epsilon\sum_{\widehat{x}}p_{0}(\widehat{x})\mathbb{E}[L|X=\widehat{x}]\log\bigg(
p_{0}(\widehat{x})\bigg)-\\&\sum_{\widehat{x}}p_{0}(\widehat{x})\big(1+\epsilon\cdot
\mathbb{E}[L|X=\widehat{x}]\big)\cdot \log\bigg(1+\epsilon\cdot
\mathbb{E}[L|X=\widehat{x}]\bigg)\\&=H(X)+\epsilon
H_L(X)-\mathbb{E}\big[r\big(\epsilon\cdot
\mathbb{E}[L|X]\big)\big].\end{eqnarray*}
\begin{figure}
\centering
\includegraphics[width=100mm]{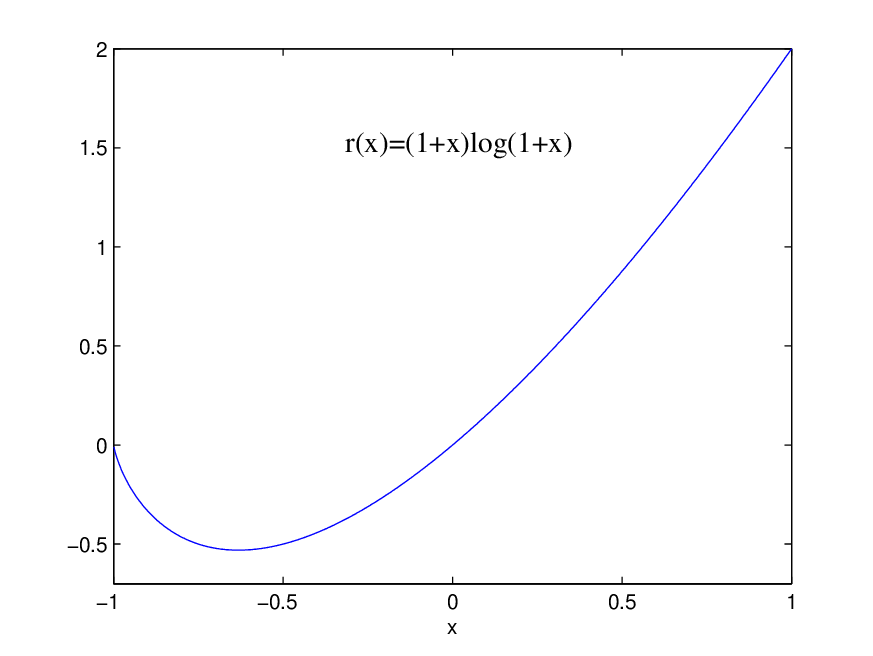}
\caption{Plot of the convex function $r(x)=(1+x)\log(1+x)$ over the
interval $[-1,1]$. Note that $r(0)=0$, $\frac{\partial}{\partial x}
r(x)=\log(1+x)+\log(e)$ and  $\frac{\partial^2}{\partial x^2}
r(x)=\frac{\log(e)}{1+x}>0$. }\label{fig:PlotOfRx}
\end{figure}
Next, note that $r(0)=0$, $\frac{\partial}{\partial x}
r(x)=\log(1+x)+\log(e)$ and  $\frac{\partial^2}{\partial x^2}
r(x)=\frac{\log(e)}{1+x}$. We have:
$$\frac{\partial}{\partial \epsilon}H(\widehat{X})= H_L(X) -
\mathbb{E}\big[\mathbb{E}[L|X]\{\log(1+\epsilon\cdot
\mathbb{E}[L|X])+\log e\}\big]= H_L(X) -
\mathbb{E}\big[\mathbb{E}[L|X]\log(1+\epsilon\cdot
\mathbb{E}[L|X])\big],$$ where at $\epsilon=0$ is equal to $H_L(X)$.

Next, we have:
\begin{eqnarray*}&\frac{\partial^2}{\partial \epsilon^2}H(\widehat{X})=
-\frac{\partial}{\partial \epsilon}
\mathbb{E}\big[\mathbb{E}[L|X]\log(1+\epsilon\cdot
\mathbb{E}[L|X])\big]\\&-
\mathbb{E}\big[\mathbb{E}[L|X]\frac{\mathbb{E}[L|X]}{1+\epsilon\cdot
\mathbb{E}[L|X]}\log e\big]=-\log e\cdot
\mathbb{E}\big[\frac{\mathbb{E}[L|X]^2}{1+\epsilon\cdot
\mathbb{E}[L|X]}\big]\end{eqnarray*}

On the other hand,
\begin{eqnarray*}&I(\epsilon)=\sum_{x}\bigg(\frac{\partial}{\partial \epsilon}\log_e(p_{\epsilon}(\widehat{X}=x))\bigg)^2p_{\epsilon}(\widehat{X}=x)
=\\&\sum_{x}\bigg(\frac{\partial}{\partial
\epsilon}\log_e\bigg(p_0(X=x)\cdot \big(1+\epsilon\cdot
\mathbb{E}[L|X=x]\big)\bigg)\bigg)^2p_0(X=x)\cdot
\big(1+\epsilon\cdot \mathbb{E}[L|X=x]\big)
=\\&\sum_{x}\bigg(\frac{\partial}{\partial \epsilon}\log_e
\big(1+\epsilon\cdot \mathbb{E}[L|X=x]\big)\bigg)^2p_0(X=x)\cdot
\big(1+\epsilon\cdot \mathbb{E}[L|X=x]\big)
=\\&\sum_{x}\bigg(\frac{\mathbb{E}[L|X=x]}{1+\epsilon\cdot
\mathbb{E}[L|X=x]}\bigg)^2p_0(X=x)\cdot \big(1+\epsilon\cdot
\mathbb{E}[L|X=x]\big)=
\\&\sum_{x}\frac{\mathbb{E}[L|X=x]^2}{1+\epsilon\cdot
\mathbb{E}[L|X=x]}p_0(X=x)=\mathbb{E}\big[\frac{\mathbb{E}[L|X]^2}{1+\epsilon\cdot
\mathbb{E}[L|X]}\big].\end{eqnarray*}
\end{proof}
\appendices
\section{}\label{sec:apndxNew2}
In this appendix we show that ${\mathcal{C}}_{M-I}^{S_u, S_v, |\mathcal{X}|+4}(q(y, z|x))$ is convex and closed for any $S_u$ and $S_v$.
We begin by proving that the region ${\mathcal{C}}_{M-I}^{S_u, S_v,
S_w}$ is closed. Since the ranges of all the involved
random variables are limited and the conditional mutual information
function is continuous, the set of admissible joint probability
distributions $p(u,v,w,x,y,z)$ where $I(UVW;YZ|X)=0$ and
$p(y,z|x)=q(y,z|x)$ will be a compact set (when viewed as a subset
of the ambient Euclidean space). The fact that mutual information function
is continuous implies that the union over random variables
$U,V,W,X,Y,Z$ satisfying the cardinality bounds, having the joint
distribution $p(u,v,w,x,y,z)=p(u,v,w,x)q(y,z|x)$, of the region defined by equations (\ref{eqn:DefOfMKGn1}-\ref{eqn:DefOfMKGn4}) is compact, and thus closed.

Next we prove that ${\mathcal{C}}_{M-I}^{S_u, S_v, |\mathcal{X}|+4}(q(y, z|x))$ is convex. Since
${\mathcal{C}}_{M-I}^{S_u, S_v, S_w}(q(y, z|x))$ is a subset of
${\mathcal{C}}_{M-I}^{S_u, S_v, |\mathcal{X}|+4}(q(y, z|x))$ as mentioned in step 1 in the proof of Theorem \ref{Thm:Thm1}, it
suffices to show that $\bigcup_{S_w\geq 0}{\mathcal{C}}_{M-I}^{S_u,
S_v, S_w}(q(y, z|x))$ is convex. Take two arbitrary points $(R_1,
R_2, ..., R_4)$ and $(\widetilde{R_1}, \widetilde{R_2}, ...,
\widetilde{R_4})$ in $\bigcup_{S_w\geq 0}{\mathcal{C}}_{M-I}^{S_u,
S_v, S_w}(q(y, z|x))$. Corresponding to $(R_1, ..., R_4)$ and
$(\widetilde{R_1}, ..., \widetilde{R_4})$ are joint distributions
$p_0(u,v,w,x,y,z)=p_0(u,v,w,x)q(y,z|x)$ on $U,V,W,X,Y,Z$, and
$p_0(\widetilde{u},\widetilde{v},\widetilde{w},\widetilde{x},\widetilde{y},\widetilde{z})=p_0(\widetilde{u},\widetilde{v},\widetilde{w},\widetilde{x})q(\widetilde{y},\widetilde{z}|\widetilde{x})$
on
$\widetilde{U},\widetilde{V},\widetilde{W},\widetilde{X},\widetilde{Y},\widetilde{Z}$,
 where $|\mathcal{U}|=|\mathcal{\widetilde{U}}|=S_u$,
$|\mathcal{V}|=|\mathcal{\widetilde{V}}|=S_v$, and furthermore the
following equations are satisfied: $R_1\leq \min(I(W;Y),I(W;Z))$, $R_2\leq I(UW;Y)$, ..., $\widetilde{R_1}\leq
\min(I(\widetilde{W};\widetilde{Y}),
I(\widetilde{W};\widetilde{Z}))$, $\widetilde{R_2}\leq
I(\widetilde{U}\widetilde{W};\widetilde{Y})$, ... etc. 

Without loss of generality we can assume that
($\widetilde{U},\widetilde{V},\widetilde{W},\widetilde{X},\widetilde{Y},\widetilde{Z}$)
and $(U,V,W,X,Y,Z)$ are independent. Let $Q$ be a uniform binary
random variable independent of all previously defined random
variables. Let
$(\widehat{U},\widehat{V},\widehat{W},\widehat{X},\widehat{Y},\widehat{Z})$
be equal to $(U,V,WQ,X,Y,Z)$ when $Q=0$, and equal to
($\widetilde{U},\widetilde{V},\widetilde{W}Q,\widetilde{X},\widetilde{Y},\widetilde{Z}$)
when $Q=1$. One can verify that
$p(\widehat{Y}=y,\widehat{Z}=z|\widehat{X}=x)=q(\widehat{Y}=y,\widehat{Z}=z|\widehat{X}=x)$,
$I(\widehat{U}\widehat{V}\widehat{W};\widehat{Y}\widehat{Z}|\widehat{X})=0$,
and furthermore
\begin{eqnarray*}&I(\widehat{W};\widehat{Y})\geq
\frac{1}{2}I(W;Y)+\frac{1}{2}I(\widetilde{W};\widetilde{Y})\\&
I(\widehat{W};\widehat{Z})\geq
\frac{1}{2}I(W;Z)+\frac{1}{2}I(\widetilde{W};\widetilde{Z})\\&
I(\widehat{U}\widehat{W};\widehat{Y})\geq
\frac{1}{2}I(UW;Y)+\frac{1}{2}I(\widetilde{U}\widetilde{W};\widetilde{Y})\\&...
\end{eqnarray*}
Hence $(\frac{1}{2}R_1+\frac{1}{2}\widetilde{R_1},
\frac{1}{2}R_2+\frac{1}{2}\widetilde{R_2}, ...,
\frac{1}{2}R_4+\frac{1}{2}\widetilde{R_4})$ belongs to
$\bigcup_{S_w\geq 0}{\mathcal{C}}_{M-I}^{S_u, S_v, S_w}(q(y, z|x))$.
Thus $\\ \bigcup_{S_w\geq 0}{\mathcal{C}}_{M-I}^{S_u, S_v, S_w}(q(y,
z|x))={\mathcal{C}}_{M-I}^{S_u, S_v, |\mathcal{X}|+4}(q(y, z|x))$ is
convex.

\section{}\label{sec:apndxL}
In this appendix, we prove that the closure of
${\mathcal{C}}_{M}(q(y, z|x))$ is equal to the closure of
$\\\bigcup_{S_u, S_v, S_w\geq 0}{\mathcal{C}}_{M}^{S_u, S_v,
S_w}(q(y, z|x))$. In order to show this it suffices to show that any
triple $(R_0, R_1, R_2)$ in ${\mathcal{C}}_{M}(q(y, z|x))$ is a
limit point of $\bigcup_{S_u, S_v, S_w\geq 0}{\mathcal{C}}_{M}^{S_u,
S_v, S_w}(q(y, z|x))$. Since $(R_0, R_1, R_2)$ is in
${\mathcal{C}}_{M}(q(y, z|x))$, random variables $U, V, W, X, Y$ and
$Z$ for which equations \ref{eqn:DefOfMKG1}, \ref{eqn:DefOfMKG2},
\ref{eqn:DefOfMKG3} and \ref{eqn:DefOfMKG4} are satisfied exist.
First assume $U, V, W$ are discrete random variables taking values
in $\{1,2,3,...\}$. For any integer $m$, let $U_m, V_m$ and $W_m$ be
truncated versions of $U, V$ and $W$ defined on $\{1,2,3,...,m \}$
as follows: $U_m, V_m$ and $W_m$ are jointly distributed according
to $p(U_m=u, V_m=v, W_m=w)=\frac{p(U=u, V=v, W=w)}{p(U\leq m, V\leq
m, W\leq m)}$ for every $u$, $v$ and $w$ less than or equal to $m$.
Further assume that $X_m$, $Y_m$ and $Z_m$ are random variables
defined on $\mathcal{X}$, $\mathcal{Y}$ and $\mathcal{Z}$ where
$p(Y_m=y,Z_m=z,X_m=x|U_m=u,V_m=v,W_m=w)=p(Y=y,Z=z,X=x|U=u,V=v,W=w)$
for every $u$, $v$ and $w$ less than or equal to $m$, and for every
$x$, $y$ and $z$. Note that the joint distribution of $U_m, V_m,
W_m, X_m, Y_m$ and $Z_m$ converges to that of $U, V, W, X, Y$ and
$Z$ as $m\rightarrow \infty$. Therefore the mutual information terms
$I(W_m;Y_m)$, $I(W_m; Z_m)$, $I(W_mU_m; Y_m)$, ... (that define a
region in ${\mathcal{C}}_{M}^{m, m, m}(q(y, z|x))$) converge to the
corresponding terms $I(W;Y)$, $I(W; Z)$, $I(WU; Y)$, ... Therefore
$(R_0, R_1, R_2)$ is a limit point of $\bigcup_{S_u, S_v, S_w\geq
0}{\mathcal{C}}_{M}^{S_u, S_v, S_w}(q(y, z|x))$.

Next assume that some of the random variables $U$, $V$ and $W$ are
continuous. Given any positive $q$, one can quantize the continuous
random variables to a precision $q$, and get discrete random
variables $U_q$, $V_q$ and $W_q$. We have already established that
any point in the Marton's inner bound region corresponding to $U_q,
V_q, W_q, X, Y, Z$ is a limit point of $\bigcup_{S_u, S_v, S_w\geq
0}{\mathcal{C}}_{M}^{S_u, S_v, S_w}(q(y, z|x))$. The joint
distribution of $U_q, V_q, W_q, X, Y, Z$ converges to that of $U, V,
W, X, Y, Z$ as $q$ converges to zero. Therefore the corresponding
mutual information terms $I(W_q;Y_q)$, $I(W_q; Z_q)$, $I(W_qU_q;
Y_q)$, ... (that define a region in ${\mathcal{C}}_{M}(q(y, z|x))$)
converge to the corresponding terms $I(W;Y)$, $I(W; Z)$, $I(WU;
Y)$,.... Therefore $(R_0, R_1, R_2)$ is a limit point of
$\bigcup_{S_u, S_v, S_w\geq 0}{\mathcal{C}}_{M}^{S_u, S_v, S_w}(q(y,
z|x))$.

\section{}\label{sec:apndxM}
In this appendix, we prove that ${\mathscr{C}}(q(y, z|x))$ is equal
to ${\mathscr{L}}(q(y, z|x))$. Clearly ${\mathscr{C}}(q(y,
z|x))\subset {\mathscr{L}}(q(y, z|x))$. Therefore we need to show
that ${\mathscr{L}}(q(y, z|x))\subset {\mathscr{C}}(q(y, z|x))$.

We need two definitions before proceeding. Let ${\mathscr{L}}'(q(y, z|x))$ be a subset of $\mathbb{R}^6$ defined
as the union of
\begin{eqnarray*}&\Delta\big(\big\{\big(I(W;Y), I(W;Z), I(UW;Y),
I(VW;Z),\\&I(U;Y|W)+I(V;Z|W)-I(U;V|W)+I(W;Y),\\&
I(U;Y|W)+I(V;Z|W)-I(U;V|W)+I(W;Z)\big)\big\}\big),\end{eqnarray*}
over random variables $U,V,W,X,Y,Z$, having the joint distribution
$p(u,v,w,x,y,z)=p(u,v,w,x)q(y,z|x)$ and satisfying the cardinality constraints $|\mathcal{U}|\leq |\mathcal{X}|$, $|\mathcal{V}|\leq |\mathcal{X}|$ and $|\mathcal{W}|\leq |\mathcal{X}|+4$.  ${\mathscr{C}}'(q(y, z|x))$ is defined similarly, except that the additional constraint $H(X|UVW)=0$ is imposed on the auxiliary random variables. Note that the region
${\mathscr{L}}'(q(y, z|x))$ specifies ${\mathscr{L}}(q(y, z|x))$, since given any $p(u,v,w,x,y,z)=p(u,v,w,x)q(y,z|x)$ the
corresponding vector in ${\mathscr{L}}'(q(y, z|x))$ is providing
the values for the right hand side of the 6 inequalities that define
the region ${\mathscr{L}}(q(y, z|x))$. Similarly ${\mathscr{C}}'(q(y, z|x))$ specifies ${\mathscr{C}}(q(y, z|x))$.

Instead of showing that ${\mathscr{L}}(q(y, z|x))\subset {\mathscr{C}}(q(y, z|x))$, it suffices to show that ${\mathscr{L}}'(q(y, z|x))\subset
{\mathscr{C}}'(q(y, z|x))$.\footnote{This is true because
$(R_{0}, R_{1}, R_{2})$ being in ${\mathscr{L}}(q(y, z|x))$ implies
that $(R_{0}, R_{0}, R_{0}+R_{1}, R_{0}+R_{2},
R_{0}+R_{1}+R_{2},R_{0}+R_{1}+R_{2})$ is in ${\mathscr{L}}'(q(y,
z|x))$. If ${\mathscr{L}}'(q(y, z|x))(q(y, z|x))$ is a subset of
${\mathscr{C}}'(q(y, z|x))$, the latter point would belong to
${\mathscr{C}}'(q(y, z|x))$. Therefore $(R_{0}, R_{1}, R_{2})$
belongs to $\mathscr{C}(q(y, z|x))$.} It suffices to prove that
${\mathscr{C}}'(q(y, z|x))$ is convex, and that for any
$\lambda_1$, $\lambda_2$, ..., $\lambda_6$, the maximum of
$\sum_{i=1}^6\lambda_iR_i$ over triples $(R_1, R_2, ..., R_6)$ in
$\mathscr{L}'(q(y, z|x))$, is less than or equal to the maximum
of $\sum_{i=1}^6\lambda_iR_i$ over triples $(R_1, R_2, ..., R_6)$ in
$\mathscr{C}'(q(y, z|x))$.

In order to show that ${\mathscr{C}}'(q(y, z|x))$ is convex, we
take two arbitrary points in ${\mathscr{C}}'(q(y, z|x))$.
Corresponding to them are joint distributions $p(u_1, v_1, w_1, x_1,
y_1, z_1)$ and $p(u_2, v_2, w_2, x_2, y_2, z_2)$. Let $Q$ be a
uniform binary random variable independent of all previously defined
random variables, and let $U=U_Q$, $V=V_Q$, $W=(W_Q,Q)$, $X=X_Q$,
$Y=Y_Q$ and $Z=Z_Q$. Clearly $H(X|UVW)=0$, and furthermore
$I(W;Y)\geq \frac{1}{2}\big(I(W_1;Y_1)+I(W_2;Y_2)\big)$, $I(W;Z)\geq
\frac{1}{2}\big(I(W_1;Z_1)+I(W_2;Z_2)\big)$, ... etc. Random variable
$W$ is not defined on an alphabet set of size
$|\mathcal{X}|+4$. However, one can reduce the cardinality of $W$
using the Carath\'{e}odory theorem by fixing $p(u,v,x,y,z|w)$ and changing the
marginal distribution of $W$ in a way that at most $|\mathcal{X}|+4$
elements get non-zero probability assigned to them. Since we have
preserved $p(u,v,x,y,z|w)$ throughout the process, $p(x|u,v,w)$ will
continue to belong to the set $\{0,1\}$ after reducing the cardinality
of $W$.

Next, we need to show that for any $\lambda_1$, $\lambda_2$, ...,
$\lambda_6$, the maximum of $\sum_{i=1}^6\lambda_iR_i$ over triples
$(R_1, R_2, ..., R_6)$ in $\mathscr{L}'(q(y, z|x))$, is less than
or equal to the maximum of $\sum_{i=1}^6\lambda_iR_i$ over triples
$(R_1, R_2, ..., R_6)$ in $\mathscr{C}'(q(y, z|x))$. As discussed
in the proof of theorem \ref{Thm:Thm1}, without loss of generality
we can assume $\lambda_i$ is non-negative for $i=1,2,...,6$.

Take an arbitrary point $(R_1, R_2, ..., R_6)$ in
$\mathscr{L}'(q(y, z|x))$. By definition there exist random
variables $U, V, W, X, Y$ and $Z$ for which
\begin{eqnarray}&\sum_{i=1}^6\lambda_iR_i\leq \lambda_1\cdot
I(W;Y)+\lambda_2\cdot I(W;Z)+\lambda_3\cdot I(UW;Y)+\lambda_4\cdot
I(VW;Z)+\label{eqn:AppndxM1}\\& \lambda_5\cdot
\big(I(U;Y|W)+I(V;Z|W)-I(U;V|W)+I(W;Y)\big)+\nonumber\\&\lambda_6\cdot
\big(I(U;Y|W)+I(V;Z|W)-I(U;V|W)+I(W;Z)\big).\nonumber\end{eqnarray}
Fix $p(u,v,w)$. The right hand side of equation (\ref{eqn:AppndxM1})
would then be a convex function of $p(x|u,v,w)$.\footnote{This is
true because $I(W;Y)$ is convex in the conditional distribution
$p(y|w)$; similarly $I(U;Y|W=w)$ is convex for any fixed value of
$w$. The term $I(U;V|W)$ that appears with a negative sign is
constant since the joint distribution of $p(u,v,w)$ is fixed.}
Therefore its maximum occurs at the extreme points when
$p(x|u,v,w)\in \{0,1\}$ whenever $p(u,v,w)\neq 0$. Therefore random
variables $\widehat{U}, \widehat{V}, \widehat{W}, \widehat{X},
\widehat{Y}$, and $\widehat{Z}$ exist for which
\begin{eqnarray*}&\lambda_1\cdot
I(W;Y)+\lambda_2\cdot I(W;Z)+...+\lambda_6\cdot
\big(I(U;Y|W)+I(V;Z|W)-I(U;V|W)+I(W;Z)\big)\leq\\& \lambda_1\cdot
I(\widehat{W};\widehat{Y})+\lambda_2\cdot
I(\widehat{W};\widehat{Z})+...+\lambda_6\cdot
\big(I(\widehat{U};\widehat{Y}|\widehat{W})+I(\widehat{V};\widehat{Z}|\widehat{W})-I(\widehat{U};\widehat{V}|\widehat{W})+I(\widehat{W};\widehat{Z})\big)\end{eqnarray*}
and furthermore $p(\widehat{x}|\widehat{u},\widehat{v},\widehat{w})
\in \{0,1\}$ for all $\widehat{x}$, $\widehat{u}$, $\widehat{v}$ and
$\widehat{w}$ where $p(\widehat{u},\widehat{v},\widehat{w})>0$.

\section{}\label{sec:apndxVI}
In this appendix, we complete the proof of Theorem \ref{Thm:Thm2} by
showing that given any random variables $U, V, W, X, Y$ and $Z$
where $p(u,v,w,x,y,z)=p(u,v,w,x)q(y,z|x)$ holds, $U$ and $V$ are
binary, $H(X|UVW)$ is zero, the transition matrices $P_{Y|X}$ and
$P_{Z|X}$ have positive elements, and for any value of $w$ where
$p(w)>0$, either $I(U;V|W=w, Y)=0$ or $I(U;V|W=w, Z)=0$ holds, the
following inequality is true: $$I(U;Y|W=w)+I(V;Z|W=w)-I(U;V|W=w)\leq
T(p(X=1|W=w)).$$

We assume $I(U;V|W=w, Y)=0$ (the proof for the case $I(U;V|W=w,
Z)=0$ is similar). First consider the case in which the individual
capacity $C_{P_{Y|X}}$ is zero. We will then have $I(U;Y|W=w)=0$ and
$T(p(X=1|W=w))=I(X;Z|W=w)\geq I(V;Z|W=w)-I(U;V|W=w)$. Therefore the
inequality holds in this case. Assume therefore that $C_{P_{Y|X}}$is
non-zero.

It suffices to prove the following proposition:

\emph{Proposition:} For any random variables $U,V,X,Y$ and $Z$
satisfying
\begin{itemize}
  \item $UV\rightarrow X\rightarrow YZ$,
  \item $H(X|UV)=0$,
  \item $|\mathcal{U}|=|\mathcal{V}|=|\mathcal{X}|=2$,
  \item for all $y\in \mathcal{Y}$, $p(Y=y|X=0)$ and
$p(Y=y|X=1)$ are non-zero,
  \item $C_{P_{Y|X}}\neq 0$,
  \item $I(U;V|Y)=0$,
\end{itemize}
one of the following two cases must be true: (1) at least one of the
random variables $X$, $U$ or $V$ is constant, (2) Either $U=X$ or
$U=1-X$ or $V=X$ or $V=1-X$.

\emph{Proof:} Assume that neither (1) nor (2) holds. Since
$H(X|UV)=0$, there are $2^4$ possible descriptions for $p(x|uv)$,
some of which are ruled out because neither (1) nor (2) holds. In
the following we prove that $X=U\oplus V$ and $X=U\wedge V$ can not
hold. The proof for other cases is essentially the same.

Since $C_{P_{Y|X}}\neq 0$, we conclude that the transition matrix
$P_{Y|X}$ has linearly independent rows. This implies the existence
of $y_1, y_2 \in \mathcal{Y}$ for which $p(X=1|Y=y_1)\neq
p(X=1|Y=y_2)$.\footnote{If this were not the case we would have we have
$p(X=1|Y=y_1)= p(X=1|Y=y_2)$ for all $y_1, y_2 \in \mathcal{Y}$.
This would imply that $X$ and $Y$ are independent. Since $X$ is not
constant, independence of $X$ and $Y$ implies that
$P(Y=y|X=1)=p(Y=y|X=0)$ for all $y\in \mathcal{Y}$. Therefore the
transition matrix $P_{Y|X}$ has linearly dependent rows. Hence
$I(X;Y)=0$ for all $p(x)$. Therefore $C_{P_{Y|X}}=0$ which is a
contradiction.} Furthermore since $X$ is not constant, and
$p(Y=y_1|X=0), p(Y=y_1|X=1), p(Y=y_2|X=0)$, and $p(Y=y_2|X=1)$ are
all non-zero, both $p(X=1|Y=y_1)$ and $p(X=1|Y=y_2)$ are in the open
interval $(0,1)$. Note that $I(U;V|Y)=0$ implies that
$I(U;V|Y=y_1)=0$ and $I(U;V|Y=y_2)=0$.

Let $a_{i,j}=p(U=i,V=j)$ for $i,j \in \{0,1\}$. First assume that
$X=U\oplus V$. We have
\begin{itemize}
\item
$p(u=0,v=0|y=y_i)=\frac{a_{0,0}}{a_{0,0}+a_{1,1}}p(X=0|Y=y_i)$,
\item
$p(u=0,v=1|y=y_i)=\frac{a_{0,1}}{a_{0,1}+a_{1,0}}p(X=1|Y=y_i)$,
\item
$p(u=1,v=0|y=y_i)=\frac{a_{1,0}}{a_{0,1}+a_{1,0}}p(X=1|Y=y_i)$,
\item
$p(u=1,v=1|y=y_i)=\frac{a_{1,1}}{a_{0,0}+a_{1,1}}p(X=0|Y=y_i)$.
\end{itemize}
Therefore $I(U;V|Y=y_i)=0$ for $i=1,2$ implies that
$$p(u=1,v=1|y=y_i)\times p(u=0,v=0|y=y_i)= p(u=0,v=1|y=y_i)\times
p(u=1,v=0|y=y_i).$$ Therefore
$$\frac{a_{0,0}a_{1,1}}{(a_{0,0}+a_{1,1})^2}p(X=0|Y=y_i)^2=
\frac{a_{0,1}a_{1,0}}{(a_{0,1}+a_{1,0})^2}p(X=1|Y=y_i)^2,$$ or
alternatively
\begin{equation}\label{eqn:appendixIV1}\frac{\sqrt{a_{0,0}a_{1,1}}}{a_{0,0}+a_{1,1}}p(X=0|Y=y_i)=
\frac{\sqrt{a_{1,0}a_{0,1}}}{a_{1,0}+a_{0,1}}p(X=1|Y=y_i).\end{equation}
 Since $X$ is
not deterministic, $P(X=0)=a_{0,0}+a_{1,1}$ and
$P(X=1)=a_{1,0}+a_{0,1}$ are non-zero. Next, if either of $a_{0,0}$
or $a_{1,1}$ are zero, it implies that $a_{1,0}$ or $a_{0,1}$ is
zero. But this implies that either $U$ or $V$ is a constant random variable which is a contradiction. Hence
$\frac{\sqrt{a_{0,0}a_{1,1}}}{a_{0,0}+a_{1,1}}$ and
$\frac{\sqrt{a_{1,0}a_{0,1}}}{a_{1,0}+a_{0,1}}$ are non-zero. But
then equation \ref{eqn:appendixIV1} uniquely specifies
$p(X=1|Y=y_i)$, implying that $p(X=1|Y=y_1)=p(X=1|Y=y_2)$ which is
again a contradiction.

Next assume that $X=U\wedge V$. We have:
\begin{itemize}
\item
$p(u=0,v=0|y=y_i)=\frac{a_{0,0}}{a_{0,0}+a_{0,1}+a_{1,0}}p(X=0|Y=y_i)$,
\item
$p(u=0,v=1|y=y_i)=\frac{a_{0,1}}{a_{0,0}+a_{0,1}+a_{1,0}}p(X=0|Y=y_i)$,
\item
$p(u=1,v=0|y=y_i)=\frac{a_{1,0}}{a_{0,0}+a_{0,1}+a_{1,0}}p(X=0|Y=y_i)$,
\item
$p(u=1,v=1|y=y_i)=p(X=1|Y=y_i)$.
\end{itemize}
Note that $P(X=0)=a_{0,0}+a_{0,1}+a_{1,0}$ is non-zero. Independence
of $U$ and $V$ given $Y=y_i$ implies that
$$p(u=1,v=1|y=y_i)\times p(u=0,v=0|y=y_i)= p(u=0,v=1|y=y_i)\times
p(u=1,v=0|y=y_i).$$ Therefore
$$\frac{a_{0,0}}{a_{0,0}+a_{0,1}+a_{1,0}}p(X=0|Y=y_i)p(X=1|Y=y_i)=
\frac{a_{1,0}a_{0,1}}{(a_{0,0}+a_{0,1}+a_{1,0})^2}p(X=0|Y=y_i)^2,$$
or alternatively
\begin{equation}\label{eqn:appendixIV2}a_{0,0}\cdot p(X=1|Y=y_i)=
\frac{a_{1,0}a_{0,1}}{a_{0,0}+a_{0,1}+a_{1,0}}p(X=0|Y=y_i),\end{equation}
If $a_{0,0}$ is zero, either $a_{1,0}$ or $a_{0,1}$ must also be
zero, but this implies that either $U$ or $V$ is a constant random variable which is a contradiction. Therefore $a_{0,0}$ is non-zero.
But then equation \ref{eqn:appendixIV2} uniquely specifies
$p(X=1|Y=y_i)$, implying that $p(X=1|Y=y_1)=p(X=1|Y=y_2)$ which is
again a contradiction.

\section*{Acknowledgement}
The authors would like to thank Chandra Nair for suggesting a
short cut to simplify the original proof, and Young Han Kim for
valuable comments that improved the presentation of the article. The
authors would like to thank TRUST (The Team for Research in
Ubiquitous Secure Technology), which receives support from the
National Science Foundation (NSF award number CCF-0424422) and the
following organizations: Cisco, ESCHER, HP, IBM, Intel, Microsoft,
ORNL, Pirelli, Qualcomm, Sun, Symantec, Telecom Italia and United
Technologies, for their support of this work. The research was also
partially supported by NSF grant numbers CCF-0500234, CCF-0635372,
CNS-0627161 and ARO MURI grant W911NF-08-1-0233 ``Tools for the
Analysis and Design of Complex Multi-Scale Networks."

\end{document}